\newcommand{\tps}{\mathcal{N}}  
\newcommand{\ts}{\mathcal{S}}   
\newcommand{\is}{\mathcal{N}}   
\newcommand{\lf}{\Omega}        
\newcommand{\nt}{T}			
\newcommand{\np}{\mathit{n}}    
\newcommand{\ivt}{\mathbf{X}}   
\newcommand{\pvt}{\mathbf{p}}   
\newcommand{\hpvt}{\widehat{\textbf{p}}} 
\newcommand{\rvt}{\mathbf{B}}	
\newcommand{\ovt}{\mathbf{Y}}   
\newcommand{\et}{\mathbbm{P}_{e}}         
\newcommand{\uet}{\overline{\mathbbm{P}}_e}      
\newcommand{\expt}{\mathbbm{E}}	
\newcommand{\ent}{H}   
\newcommand{\mif}{I}   
\theoremstyle{definition}
\newtheorem{definition}{Definition}
\theoremstyle{theorem}
\newtheorem{theorem}{Theorem}
\newtheorem{lemma}{Lemma}
\newtheorem{corollary}{Corollary}
\begin{document}

\title{Group Testing with Prior Statistics }

\author{Tongxin Li, Chun Lam Chan, Wenhao Huang, Tarik Kaced, and Sidharth Jaggi}
\maketitle

\begin{abstract}

We consider a new group testing model wherein each item is a binary random variable defined by
an {\it a priori} probability of being {\it defective}. 
We assume that each probability is small and that items are independent, but not necessarily identically distributed.
The goal of  a {\it group testing} algorithm is to identify with high probability the subset of defectives via non-linear (disjunctive) binary measurements. 
Our main contributions are two classes of algorithms: (1) adaptive algorithms with tests based either on a maximum entropy principle, or on a Shannon-Fano/Huffman codes; (2) non-adaptive divide and conquer algorithms.
Under loose assumptions on prior statistics and with high probability, our algorithms only need a number of measurements that is
close to the information-theoretic entropy lower bound, up to an explicitly-calculated universal constant factor.
We provide simulations to support our results.

\end{abstract}


\maketitle

\section{Introduction}

The group testing model was first suggested by
Dorfman~\cite{dorfman1943detection} over sixty years ago, and has since spawned
a vast affiliated literature on theory and applications (see the
book~\cite{du1993combinatorial} for a survey). 
The classical version of the group testing problem is that of {\it combinatorial group testing} (CGT).
In that version, it is known that there are $d$ defective items in a population of size
$n$ (the common assumption is that $d=o(n)$).
Non-linear binary disjunctive group tests (OR operations) are allowed, for which a subset of items is tested, and the test outcome is $1$ if at least one item being tested is defective, and $0$ otherwise.
In that setting, if we allow an average probability of error of at most $\et>0$, the information
theoretic lower bound $(1- \et)\log_2 {{n}\choose{d}} =(1-\et)d
\log_2\left ( \frac{n}{d} \right ) + {\cal O} (d)$ on the total number of tests is
necessary for both adaptive and non-adaptive algorithms (see for
instance~\cite{chan2011non,malioutov2012boolean,du2006pooling}). 
Adaptive group testing schemes essentially meeting this bound are
known~\cite{goodrich2008improved};  non-adaptive algorithms that
meet this bound up to small multiplicative factors are also
known~\cite{chan2011non}. Some results from the studies on CGT
also readily carry over to probabilistic model in this paper (called {\it probabilistic group testing} (PGT))
wherein the $n$ items are defective i.i.d. with a small
probability~\cite{wolf1985born, wadayama2013analysis}.\footnote{In fact, in this paper, we consider a more general setting -- the $n$ items are independently but \textit{not} identically distributed.}

We focus on a model where the statistics on the likelihood of
any given item to be defective are available prior to the design of the testing
procedure.
The motivation comes from real-world examples. For instance, when
testing a large population for a given disease (Dorfman's original motivation
in~\cite{dorfman1943detection}), historical data on the prevalence of the
disease in specific sub-populations parametrized by age, gender, height ,weight, etc
are often available.
Specifically, in a population of size $n$, we denote the status of
whether the $i$th item is defective or not by whether a corresponding binary random
variable $X_{i}$ is $1$ or $0$. The length-$n$ binary vector ${\ivt} \in
\{0,1\}^n$ is the {\it population vector}, whose recovery is the objective of the
group testing algorithm. 

Our working hypothesis on the prior statistics is that items might 
have distinct {\it a priori} probabilities of being defective (non-identical)
and are independent.\footnote{It is true that even this model is still quite restrictive --
probabilistic models with finer structure, such
as correlation between ``neighbouring'' variables, or graph constraints, to
model the effect of geography or social structures are the
subject of ongoing investigation.}
The knowledge of prior statistics can reduce significantly the number of required test
in some scenarios. Consider the following -- given the probability vector
$(p_1,\ldots,p_n)$ one can compute the {\it expected number of defective items}
as $\mu$, defined as the sum $\sum_{i=1}^{n} p_i$ of the individual probabilities, and
in fact by standard statistical arguments~\cite{hagerup1990guided} this quantity
can even be ``concentrated'' (for large enough $n$ it can be shown that with
high probability the actual number of defective items is ``relatively close'' to
its expectation).
One might then na\"ively try to use existing
PGT algorithms, under the assumption that an upper bound for $d$,
the number of defectives, is given by $(1+\delta)\mu$, for some ``small''
$\delta$. 
An immediate issue of most PGT algorithms is that they assume that the prior statistics are, in one form or another, 
uniform -- each item is equally likely to be defective. It is therefore by no means
clear why those algorithms would have the same performance in our scenario (a
na\"ive translation of results would indicate high probability of recovery with
$c\mu\log_2(n)$ tests for some universal constant $c$). Indeed, proving that
such results do indeed translate, at least for one specific algorithm for the
``usual'' PGT model is an important module of our proof (see the proofs of Theorem~\ref{th:ModCoco} and Theorem~\ref{th:BlockCoco}).

Another issue is performance-related. In general, $c\mu\log_2(n)$
tests are not necessarily within a universal constant factor of the lower
bounds on the number of tests required for high probability recovery.
Indeed, a direct extension of known information-theoretic arguments 
(\cite{malioutov2012boolean, chan2011non,chan2012non}) show that a natural
lower bound corresponds to the entropy, $H(\ivt) = \sum_{i=1}^{n} h(p_i)$ where $ h(p_i)$ denotes the binary entropy function of the random
variable $X_i$. For the sake of completeness, Theorem~\ref{thm:low_bnd}
reproduces these arguments in our (non-uniform) probabilistic model.
It is not hard to construct distributions of $\ivt$ such that the ratio between $H(\ivt)$ and $\mu\log_2(n)$ is
arbitrarily large. 
Indeed, there are some extremal instances of distribution on the population
vector $\ivt$ where $\mu$ is constant but the entropy $\ent(\ivt)$ is
arbitrarily small.\footnote{Note that for our adaptive algorithms in Section~\ref{sec:2.2.1} and~\ref{sec:2.2.2}, the upper bound on the expected number of tests requires no restriction on the distribution of the population vector $\ivt$. Nevertheless, the concentration result of our adaptive algorithms (Theorem~\ref{th:LA}) and the upper bounds on the number of tests for the non-adaptive algorithms (Theorem~\ref{th:ModCoco} and~\ref{th:BlockCoco}) do require some loose conditions defined in Section~\ref{sec:pre}.}
Thus existing PGT algorithms might not be optimal and 
have their performance is not guaranteed 
under ``standard'' input assumptions.

\subsection{Related Work}

Some previous attempts to analyze models with prior statistic
information include an information-theoretic approach, shown in~\cite{torney1998optimizing}. Although the model is different and more restrictive, they provide optimal or sub-optimal algorithms for a certain choices of parameters. In particular,
they deal with a special type of prior information, where the universal set is
partitioned, and within a part, all subsets of a fixed given size are
uniformly distributed. For this slight generalization of traditional
group-testing, they prove  the existence of a non-adaptive algorithm whose average
performance is information-theoretically optimal, up to a small constant factor.

The PGT model, in which each item is defective i.i.d. $\mathrm{Bernoulli}(p)$, was considered in~\cite{dorfman1943detection,sobel1959group,wolf1985born}. In some of these works, an interesting subclass of adaptive group testing strategies called ``nested test plans'' were introduced. Nest test plans are constructed in a ``laminar'' manner, \textit{i.e.,} groups comprising tests are either proper subsets of prior groups, or disjoint. A recursive algorithm was given in~\cite{sobel1959group,wolf1985born} to design optimal nested test plans. However, the computational complexity of this scheme is exponential in $n$, and also does not result in an explicit bound on the number of tests required by the algorithm. A natural lower bound stated in~\cite{sobel1959group,wolf1985born} on the number of tests required equals $n h(p)$ -- indeed, this is related to the information-theoretic lower bound derived in Theorem~\ref{thm:low_bnd}.

A relationship between PGT and the Huffman codes was also mentioned in~\cite{wolf1985born} above. However, the Huffman-based design specified therein differs significantly from the one presented in this work -- the design in~\cite{wolf1985born} in general may result in group testing algorithms that require far more than the optimal number of test.

\subsection{Contributions}

We design explicit adaptive and non-adaptive group testing algorithms for the scenario with prior statistics on the probability of items being defective. In doing so, we discover intriguing and novel connections between source codes (such as Shannon-fano codes and Huffman codes) and adaptive sgroup testing algorithms. We prove that the expected number of tests required by both our adaptive and non-adaptive algorithms are information-theoretically optimal up to explicitly computed constant factors. Under mild assumptions on the probability distribution $(p_1,\ldots,p_n)$, we further prove that with high probability, the numbers of tests required for both the adaptive and non-adaptive algorithms are tightly concentrated around their expectations.

\section{Background}

\subsection{Preliminaries}
A summary of the notations used in this paper is given in Table~\ref{table:notation}.
\begin{table}[H]
	\caption{Nomenclature} 
	\label{table:notation} 
	\centering 
	\small
	\begin{tabular}{ l |p{12cm}} 
		\hline 
		Notation & Description\\ [0.5ex] 
		\hline \\[0.2ex]
		$\tps$ &\textit{Universal set} of all items being tested\\
		$L$ &Number of pre-partitioned subsets $\is_s$\\
		$\is_s$ &Disjoint \textit{pre-partitioned subsets} of the universal set $\tps$ indexed by $s=1,\ldots,L$\\
		
		$\lf$ &Laminar family of all tested subsets $\ts_{kl_{r}}$ in the adaptive algorithms\\
		
		$n$ &Total number of items, $n=|\tps|$\\ 
		
		$\nt$ & Number of tests used by the group-testing algorithm\\
		
		
		$M$ &A $\nt\times n$ Boolean matrix defining a group testing procedure\\ 
		
		$\ivt$ &Length-$n$ initial \textit{population vector} $(X_{1},...,X_{n})$ 
		\textit{population vector} where $X_{i}$ are independent binary variables\\
		
		$\rvt$ &Length-$\nt$ binary coded \textit{result vector} $(B_{1},...,B_{n})$ 
		where $b_{i}$ is the outcome of the corresponding group test\\
		
		$\ovt$ &Length-$n$ output \textit{recovery vector} $\left(Y_{1},...,Y_{n}\right)$ 
		decoded from the result vector $\rvt$\\
		
		$\pvt$ &Length-$n$ real-valued probability vector $(p_{1},...,p_{n})$ where
		$p_i$ is the \textit{a priori} probability of $x_i$ to be defective\\
		$\mu$  &Expected number of defective items $\mu$ defined by
	$\mu=\sum_{i=1}^{n}{p_i}$ \\
		$\widehat{\pvt}$ &Modified probability vector $\pvt=\left(\widehat{p}_{1},...,\widehat{p}_{n}\right)$ by letting
		$\widehat{p_{i}}:={\left(1-p_{i}\right)}/{\left(n-\mu\right)}$\\
		
		$\et$ &\textit{Probability of error} defined by $\et:=\Pr\left[\ivt\not=\ovt\right]$\\
		
		$m$ &Number of subsets in the same step of tests in the laminar family $\lf$
		and $k$ is the index used for the depth of the binary tree.  For example, the
		total number of subsets $S_{kl_{r}}$ in the $k$th stage is $m_{k}$ \\
		
		$g$ &\textit{Group testing sampling parameter} for the testing matrix $M$ in
		the non-adaptive algorithm\\
		[1ex] 
		\hline 
	\end{tabular} 
\end{table}
\subsubsection{Model and Notations}
\label{sec:model}

Let $\tps=\lbrace{X_1,X_2,\ldots,X_n}\rbrace$ denote the
\textit{universal set}, the set of $n$ items being tested where each $X_i\in\{0,1\}$ is
a binary random variable independent with the others. 
Let $\ivt=\left(X_1,X_2,\ldots,X_n\right)\in\lbrace{0,1}\rbrace^n$ 
be the \textit{population vector}, the initial vector for our group testing. 
Furthermore, we assume each \textit{testing item} $X_i$ can be \textit{defective} 
with \textit{a priori probability} $p_{i}\geq 0$ which means that $X_{i}$ takes value $1$ with probability $p_{i}$. Denote by $\pvt:=(p_1,\ldots,p_n)$ the corresponding \textit{probability vector} for the items in $\tps$.
A \textit{test} is based on a subset $\ts\subseteq\tps$. If one or more than one items in the subset being tested are defective (taking values $1$), then the test outcome is positive, otherwise it is negative. The \textit{testing procedure} is the collection of all tests.
Denote the corresponding coded vector by
$\rvt=\left(B_1,B_2,\ldots,B_\nt\right)\in\lbrace{0,1}\rbrace^\nt$, namely the
\textit{result vector} that contains the test results.  
The decoding process returns an output vector denoted
by $\ovt=\left(Y_1,Y_2,\ldots,Y_n\right)\in\lbrace{0,1}\rbrace^n$, which is
called the \textit{recovery vector}. 

In our probabilistic model, we choose to translate the sparsity requirement in CGT into the following natural \textit{sparse property}:
the \textit{expected number of defective items} $\mu$ satisfies
$\mu=\sum_{i=1}^{n}{p_i}\ll\np$. The \textit{average probability of error}\footnote{For simplicity we sometime use the term ``probability of error'' stand for $\et$ throughout the paper.} is defined by $\et:=\Pr\left(\ivt\neq \ovt\right)$ where the randomness is over the realizations of the population vector $\ivt$ and the testing algorithms. In our design, an upper bound $\uet$ on $\et$ is selected before testing the items. The upper bound $\uet$ serves as a fixed \textit{error threshold} such that the considered algorithm is guaranteed to have an average probability of error satisfying $\et\leq \uet$. 

{Group testing algorithms based on the previously introduced framework perform a sequence of measurements and guarantee that $\ovt$ matches $\ivt$ with high probability. The objective is to minimize the number of tests $\nt$, and meanwhile to guarantee the reconstruction of the population vector with a ``small'' probability of error.

\subsubsection{Pre-partition Model}
\label{sec:pre}
We define a partition  on the universal set $\tps$, wherein the subsets satisfy some requirements. These requirements will induce a slight modification of our algorithms (for the first stage, we will test the subsets from the partition). This technicality allows us to prove a relatively acceptable bound on the number of tests required for the algorithms. In the sequel, we define the aforementioned partition and the corresponding conditions.


A pre-partition is established by sorting the probabilities in $\pvt$ according to the following definitions:
\begin{definition}
	\label{def:2}
	For any non-empty subset $\is_s$ ($s=1,\ldots,n$) of the universal set $\tps$, we say $\is_s$ is a
	{\emph{well-balanced subset}} if the corresponding \textit{a priori} probabilities $p_{i}$ of
	items $X_{i}$ in $\is_s$ satisfy the following constraint:
	\begin{align*}
	{p_{i}}^{2}&\leq p_{j} \qquad \forall \ X_{i},X_{j}\in \is_s.
	\end{align*}
	
	Furthermore, if the following constraint is also satisfied for some positive constant $0<\gamma<1$, 
	\begin{align*}
	\frac{\uet}{2\np}<p_{i}<\gamma \qquad \forall \ X_{i}\in \is_s,
	\end{align*}
	we say $\is_s$ is a
	$\gamma$-{\emph{bounded subset}} (or for simplicity, \textit{bounded subset}). Furthermore, the subset $\is_s$ is said to be {\emph{bounded from below by
			${\uet}/{2\np}$}} if $p_i>{\uet}/{2\np}$; $\is_s$ is said to be {\emph{bounded from above by
			$\gamma$}} if $p_i<\gamma$. Otherwise $\is_s$ is said to be
	$\gamma$-{\emph{unbounded}}.
\end{definition}

The pre-partition is as follows.

\begin{itemize}
	\item First we sort all the \textit{a priori} probabilities in the probability vector $\pvt$.
	\item Then we divide the universal
	set $\tps=\bigcup_{s=1}^{L}\is_s$ into $L>0$ many disjoint subsets $\is_s$ ($s=1,\ldots,L$). Without loss of generality, we assume 
	\begin{align*}
	L:=\log_2\left(\log_{1/\gamma}\left(\frac{2\np}{\uet}\right)\right)+2\in\mathbbm{N}^+
	\end{align*}
	is a positive integer such that $\tps$ consists of $2$ unbounded subsets (representing the tails) and
	$L-2$ \textit{well-balanced and $\gamma$-bounded} subsets. The partition is constructed according to the following
	\begin{align*}
	p_{i}\in\begin{cases}
	\left[0,\frac{\uet}{2\np}\right], \quad &\text{for} \ \ X_{i}\in \is_1\\
	\left(\left(\frac{\uet}{2\np}\right)^{\left(\frac{1}{2}\right)^{s-2}},\left(\frac{\uet}{2\np}\right)^{\left(\frac{1}{2}\right)^{s-1}}\right], \quad &\text{for} \ X_{i}\in \is_s, s=2,\ldots,L-1\\
	\left(\gamma,1\right], \quad &\text{for} X_{i}\in \is_L
	\end{cases}.
	\end{align*}

\end{itemize}

Note that $L$ is determined by our chosen error
threshold $\uet$, the probability vector $\pvt$ and the population size $\np$. After creating the partition, we classify the $L$ subsets using the definition below.

\begin{definition}
\label{def:3}
For any non-empty well-balanced subset $\is_s$ ($s=1,\ldots, L$) of the universal set $\tps$, we say that it is
 {\emph{ample}} if the cardinality of the subset satisfies $|\is_s|\geq
\Gamma_\gamma$. Otherwise we say $\is_s$ is {\emph{not ample}}.

\end{definition}

Now for all pre-partitioned subsets $\is_s$, if $\is_s$ is ample, we regard it
as a feasible subset for the group testing and implement our group-testing
algorithms on each such subset separately. For
those subsets $\is_s$ which are not ample and the last subset $\is_L$ which is
not bounded above by $1>\gamma>0$, we combine them together and test all the
items in the combined set individually; for the first subset $\is_1$ which is
not bounded below by ${\et}/{2\np}$, we simply regard all items in $\is_1$ as
non-defective items without doing any test. Based on the
testing framework specified above (see Figure~\ref{fig:partition}), one can show corresponding upper
bounds on the number of tests for both adaptive and non-adaptive algorithms. The results are given in
Section~\ref{ss:theorems} with proofs provided in appendices.

 The flowchart in Figure~\ref{fig:partition}
demonstrates this procedure.

\begin{figure}[htbp]
	\centering
	\includegraphics[scale=0.4]{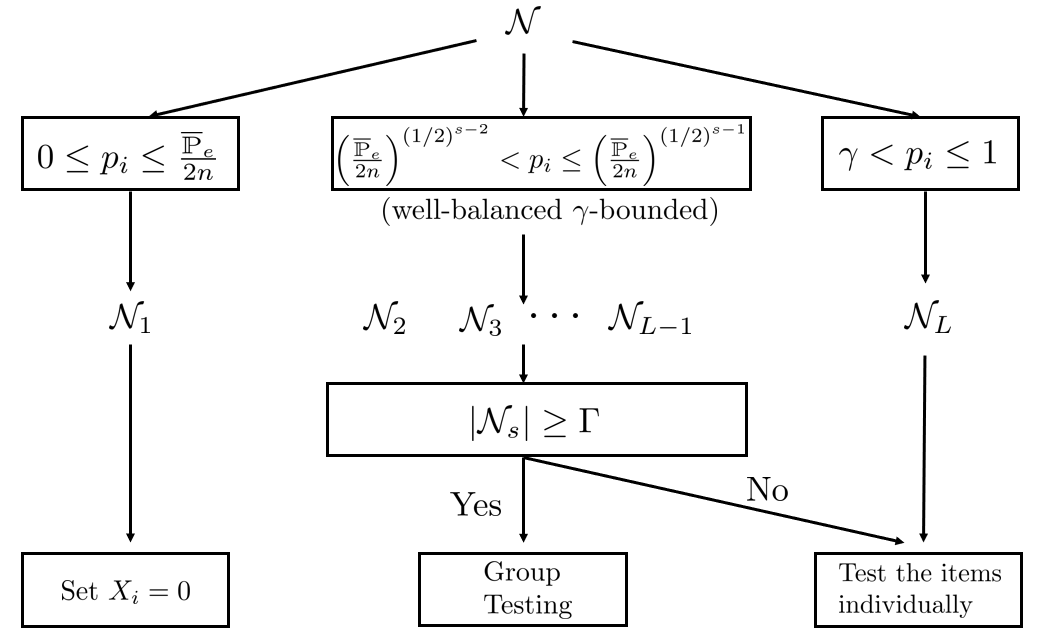}
	\caption{
		A diagram illustrating the partitioning procedure (for the non-adaptive algorithm). For instance, 
		two possible empty subsets $\is_1$ and $\is_L$ without bounds, while
		the remaining $L-2$ ``trimmed'' subsets are well-balanced and bounded. The elements in $\is_1$
		are assigned $0$ directly is because that the corresponding probabilities in $\is_1$ are small enough, thus applying the union bound, we can still get a proper upper bound on the error 
		probability. Details are given in Appendices \ref{pf2} and \ref{pf4}.
	}
	\label{fig:partition}
\end{figure}

\subsection{Fundamental Limits}

Before describing our algorithms, our first result states a universal
information-theoretic lower bound on the number of tests for the PGT model stated in Section~\ref{sec:model}.

\begin{theorem}[Lower Bound\footnote{Similar techniques were used in works in the Russian literature (see for instance \cite{Mal:78,Dya:04} to give information-theoretic lower bounds on the required number of tests when the probabilities of items being defective are homogeneous.}]
\label{thm:low_bnd}
Any Probabilistic Group Testing algorithm with noiseless measurements 
whose probability of error is at most $\et$ requires at least 
$\left(1-\et\right)\ent(\ivt)$ tests.
\end{theorem}

The proof can be found in Appendix~\ref{pf1}.

As an immediate corollary, if all probabilities $\{p_{i}\}_{i=1}^{n}$ are close to $1/2$, the
most efficient way to proceed is to test each element individually.\footnote{Considering the disjunctive nature of measurements, it is therefore natural to test the items in the tail set $\is_L$ individually.}

We believe that this theorem is a witness of a relationship between compression codes and group testing.
It is a counterpart of the well-known data compression
lower bound. Indeed, given a probability distribution, the expected length of any 
code is also bounded from below by the entropy $\ent(\ivt)$ of the distribution
$\ivt$. Further, sub-optimal/optimal codes such as Shannon-Fano/Huffman
codes~\cite{shannon2001mathematical,huffman1952method,cover2012elements} meet
this bound up to small additive factor.
Some of our algorithms also employ such codes, in a different way, and meet the
PGT lower bound in Theorem~\ref{thm:low_bnd} up to a multiplicative factor.

This theorem is also be used in Section~\ref{exp} as a benchmark for the
 simulations of the (adaptive and non-adaptive) algorithms.

\section{Main Results}

In the sequel, we formally describe the (adaptive and non-adaptive) algorithms for the considered PGT model.
\subsection{(Adaptive) Laminar Algorithms}
\label{sec:ALA}

In adaptive algorithms, the order of the tests matters since we can
design later tests according to the result of previous tests.
By design, our testing procedure will always satisfy the following property: if
a subset $\ts$ tested positive at stage $k$, then $\ts$ will be split into two
(children) subsets to be tested at stage $k+1$.
In this way, the whole testing procedure can be depicted as a tree where the number of stages
corresponds to depth. Child nodes correspond to subsets of items being
tested in their parent node. Leaves are individual tests, thus a path in this
tree identifies a single defective item.

Figure~\ref{fig:Co} partially exemplifies a typical structure of the testing tree described above. The depth of a tree represents the number of tests required, which is tightly related to the codeword length of a prefix-free code.
\begin{figure}[htbp]
	\centering
	\includegraphics[scale=0.43]{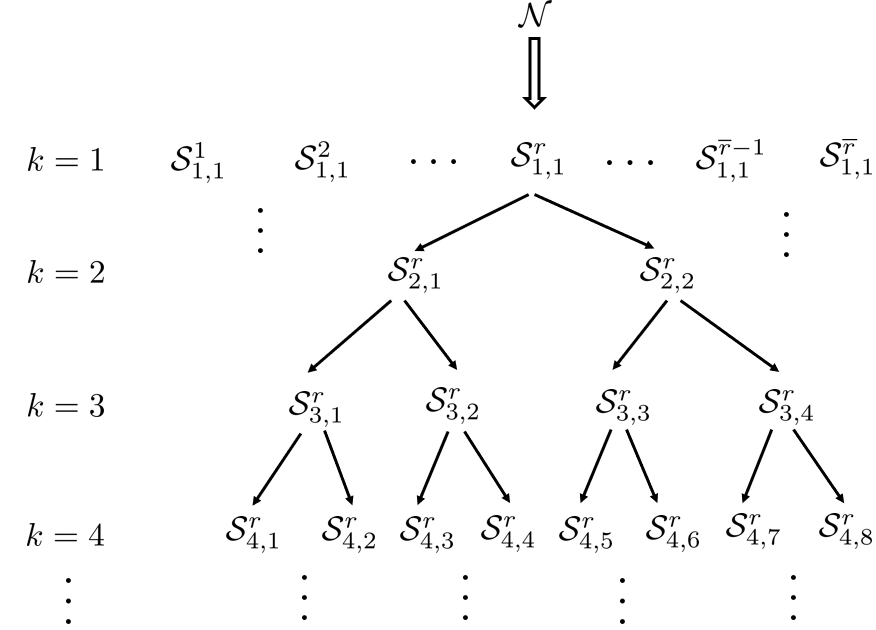}
	\caption[Co, n=512, fix t]{
		Graphical description of the laminar algorithm (without error
		estimation). The tests are done by forming trees for subsets of the universal set  $\tps$. First, we start with the initial set $\ivt$ containing $X_{1}$ up to
		$X_{n}$. Then we partition the universal set $\tps$ into several subsets from $\ts_{1,1}^{1}$ up to
		$\ts_{1,1}^{\overline{r}}$ satisfying that the quantity
		$\prod_{i\in\ts_{1,1}^{r}}{\left(1-p_{i}\right)}$ is close to $\frac{1}{2}$ for all $r=1,\ldots,\overline{r}$ and so on in this particular
		example. The construction of the testing trees can be done via two approaches -- maximum entropy-based approach and source codes-based approach, as introduced in Section~\ref{sec:2.2.1} and Section~\ref{sec:2.2.2} respectively.
	}
	\label{fig:Co}
\end{figure}

Next, we describe two different ways to construct the tree.
Both of them use a \textit{laminar family of subset} $\lf$
which contains subsets $\ts_{k,l}^{r}$ ($k,l$ are parameters indexing the child nodes and $r$ indexes the partitions) ~\cite{sobel1959group,wolf1985born}.
In this way, there is no cross-testing between
different trees. Each subset $\ts_{k,l}^{r}$ in the laminar family $\lf$ forms
a node in our set of testing trees as in Figure~\ref{fig:Co}.
Thus for simplicity, the constructed adaptive algorithms in this work are called \textit{laminar algorithms}.

We show that both two constructions in Section~\ref{sec:2.2.1} and Section~\ref{sec:2.2.2} achieve the same upper bound in Theorem~\ref{th:LA}.
A more detailed discussion is provided in Section~\ref{pf2}.

In the sequel, we formally describe the laminar algorithms.

\subsubsection{Maximum Entropy-based Laminar Algorithm}
\label{sec:2.2.1}
Given $k-1\leq\nt$, suppose we know the first $k-1$ outcomes
$\left(b_{1},b_{2},\dotsc,b_{k-1}\right)$ where $b_{i}'$ denotes the binary result of
test $i$. 
We define the next test by choosing a subset such that conditioned on the previous test outcomes, the probability $$\Pr\Big[B_{k}=0
|B_{k-1}=b_{k-1},\dotsc,B_{2}=b_{2},B_{1}=b_{1}\Big]$$ is close to
${1}/{2}$ (thus locally maximizing the information learned at each stage).

In general, getting a probability of exactly 1/2 is not possible due to the fact that the probability vector has arbitrary entries. Therefore, we choose the subsets being tested such that the probability they
contain a defective item is close to $1/2$, given the outcomes of the previous tests. Quantifying the impact of these ``quantization errors'', both in terms of the probability of error, and the number of tests required, is one of the major tasks in the proofs.
The algorithm is described and discussed in Section~\ref{sec:2.2.1} and the corresponding proof can be found in Appendix~\ref{pf2}.


Recall that the adaptive algorithms are ``tree-based''. The first stage of the tree\paragraph{First Stage}
is to divide the items into separate subsets. Indeed, the very first stage is based on an initial
partition. Thus the "tree" is not binary at the root but is binary afterwardsIn the first stage, we check whether the expected number of defectives $\mu=\sum_{i=1}^{n}{p_{i}}$ is smaller than the error threshold
(alternatively we can see it as a forest of binary trees). Indeed, each positive$\uet$. If so, we return $\ovt=\textbf{0}$;
test at stage $k$ induces two more (child) tests at stage $k+1$. otherwise, we  partition the universal set $\tps$ into subsets $\{\ts_{1,1}^{r}\}_{r=1}^{\overline{r}}$ in a \textit{greedy} manner as Figure~\ref{fig:Co} illustrates. \textit{I.e.,}
the partition is chosen such that $\Pr\left[\exists X_{i}\in \ts_{1,1}^{r}\
\text{s.t.}\ X_{i}=1\right]$ is the closest to ${1}/{2}$:
\begin{align}
&\min_{\ts_{1,1}^{r}}\left|\prod_{X_{i}\in \ts_{1,1}^{r}}{\left(1-p_{i}\right)}-\frac{1}{2}\right|\nonumber\\
&\mathrm{subject}\ \mathrm{to} \ \ts_{1,1}^{r}\subseteq\tps\backslash\bigcup_{j=1}^{r-1}\ts_{1,1}^{i}. \nonumber
\end{align}

\paragraph{Second Stage}

In the second stage, negative tests indicate that no item is defective. Thus, we only need to continue adaptively on those subsets with positive test outcomes. If a test is positive, for example, $B_r=1$, then
we divide the corresponding subset $\ts_{1,1}^{r}$ into two smaller subsets 
$\ts_{2,1}^{r}$, $\ts_{2,2}^{r}$ such that
$\Pr\left[\exists X_{i}\in \ts_{2,1}^{r}\ \text{s.t.}\
X_{i}=1|B_{r}=1\right]$ is the closest to ${1}/{2}$, i.e.,
\begin{align}
&\min_{\ts_{2,1}^{r}}\left|\frac{1-\prod_{X_{i}\in \ts_{2,1}^{r}}{\left(1-p_{i}\right)}}{1-\prod_{X_{i}\in \ts_{1,1}^{r}}{\left(1-p_{i}\right)}}-\frac{1}{2}\right|\nonumber\\
&\mathrm{subject}\ \mathrm{to} \ \ts_{2,1}^{r}\subseteq\ts_{1,1}^{r}.\nonumber
\end{align}

\paragraph{Later Stages}

Similarly, in the $k$th stage, we ignore the subsets that tested
negative in the previous stage (by marking the items inside non-defective), and split each of the remaining subsets\footnote{We only consider the subsets $\ts_{k,l}^{r}$ in which $l$ is an odd number, meaning that only the left nodes are considered and the right nodes are partitioned automatically.} into two parts
in a similar way: 
\begin{align}
&\min_{\ts_{k,l}^{r}}\left|\frac{1-\prod_{X_{i}\in \ts_{k,l}^{r}}{\left(1-p_{i}\right)}}{1-\prod_{X_{i}\in \ts_{k-1,l}^{r}}{\left(1-p_{i}\right)}}-\frac{1}{2}\right|\nonumber\\
&\mathrm{subject}\ \mathrm{to} \ \ts_{k,l}^{r}\subseteq\ts_{k-1,l}^{r}\nonumber
\end{align}
for all odd $l=1,3,\dotsc,2^{k-1}-1$.

Notice that, (a) we use ``contiguous'' partitions since the probability vector is
sorted; (b) all tests in a given stage involve disjoint subsets and can be thus
made in parallel; and (c) this procedure terminates and the leaves of the tree
correspond to tests on individual items.

\subsubsection{Shannon-Fano/Huffman Coding-based Laminar Algorithm}
\label{sec:2.2.2}

The second type of adaptive algorithms is based on the Shannon-Fano/Huffman source codes. Instead of greedily partitioning and constructing binary trees, an alternative choice is to use source codes.
Suppose the sum of probabilities in each
subset is less than one. Regarding the probabilities as weights, it is possible to construct corresponding Shannon–Fano or Huffman trees. We first partition the universal set $\tps$ into several
subsets.  Then regarding the probabilities $\{p_{i}\}_{i=1}^{n}$ as the corresponding ``weights'', testing trees can be constructed using Shannon-Fano/Huffman coding.

The construction is as follows:

\paragraph{First Stage}

The first stage is similar to the previous one except that we require the
product of $\left(1-p_{i}\right)$ in each subset to be strictly larger than
half. The partition satisfies:
\begin{align}
&\min\left|\prod_{X_{i}\in \ts_{1,1}^{r}}{\left(1-p_{i}\right)}-\frac{1}{2}\right|\nonumber\\
&\mathrm{subject}\ \mathrm{to} \  \nonumber\prod_{X_{i}\in \ts_{1,1}^{r}}{\left(1-p_{i}\right)}\in\left[\frac{1}{2},\frac{3}{4}\right]\\
&\qquad \qquad\quad\ts_{1,1}^{r}\subseteq\tps\backslash\bigcup_{j=1}^{r-1}\ts_{1,1}^{i}.\nonumber
\end{align}

\paragraph{Later Stages}
Next, within each subset $\ts_{1,1}^{r}$ ($r=1,\ldots,\overline{r}$)
we have $\prod_{x_{i}\in \ts_{1,1}^{r}}{\left(1-p_{i}\right)}\geq\frac{1}{2}$.
This implies that $\sum_{x_{i}\in \ts_{1,1}^{r}}{p_{i}}\leq 1$ (see Section~\ref{pf2} for the corresponding proof). 
For each subset $\ts_{1,1}^{r}$ ($r=1,\ldots,\overline{r}$), we set the weights $w_{i}$ 
as the corresponding $p_{i}$ and apply the Shannon-Fano coding, Huffman coding or
any source codes to construct the corresponding testing tree.

\subsubsection{Concentration of the Number of Tests}
In Theorem~\ref{th:LA}, we show that the expected number of tests can be bounded from above by $2\ent\left(\ivt\right)+2\mu$.
It remains to show that the actual number of steps in these algorithms is close to the
expected value, \textit{i.e.,} to concentrate the number of tests $\nt$ required. Since the items are independent, we partition the universal set $\tps$ into subsets and test the subsets individually using the aforementioned adaptive algorithms to guarantee the desired concentration results.  For more details, see Appendix~\ref{pf2}.



Next, we describe in details the non-adaptive block design.

\subsection{(Non-adaptive) Block Algorithm}
\label{B-Coco}

Non-adaptive algorithms require the testing procedure to be fixed in advance.
Therefore they may use more number of tests than adaptive algorithms, but the advantage is that the tests can be done in parallel, which is convenient for hardware design.

For the design of our non-adaptive algorithms, we represent the tests as a 
$T\times n$ Boolean matrix \textit{group-testing matrix} $\mathbf{M}\in\{0,1\}^{T\times n}$.
Each row of $\mathbf{M}$ corresponds to a measurement, and each
column corresponds to a single item to be tested. In this way, we have the
the population vector $\ivt\in\{0,1\}^n$ and the result vector $\rvt\in\{0,1\}^T$ satisfy
\begin{align*}
\rvt&=\mathbf{M} \ivt.
\end{align*}

\subsubsection{Coupon Collector Algorithm}

As introduced in~\cite{chan2012non}, the coupon collector algorithm (CCA) is a non-adaptive algorithm achieving the information-theoretic lower bound on the number of tests for the CGT model.
The cooresponding group-testing matrix $\mathbf{M}_{\mathrm{CCA}}$ is defined as follows.
A \textit{group testing sampling parameter $g$} is chosen by optimization 
which is fixed by the probability vector $\pvt$.
The $i$-th row of $M$ is then obtained by sampling
probability vector $\widehat{\pvt}=(\widehat{p}_1,\dotsc,\widehat{p}_n)$, where
$\widehat{p}_i={\left(1-p_i\right)}/{\left(n-\mu\right)}$, exactly $g$ times with replacement (for
convenience),
and setting $M_{i,j}=1$ if $j$ is sampled (at least once) during this process,
and zero otherwise~\cite{chan2012non}. The authors in~\cite{chan2012non} show that the testing procedure requires only 
$4(1+\delta)e\mu \ln n$ tests with high probability.

However, this bound is often worse than the corresponding information-theoretic lower bound in the PGT model, as the distribution $\pvt$ on items $x_{i}$ is sometime far from being
uniform.

\subsubsection{Block Design based on the Pre-partition Model}
\label{sec:block}
\begin{figure}[htbp]
	\centering
	\includegraphics[scale=0.43]{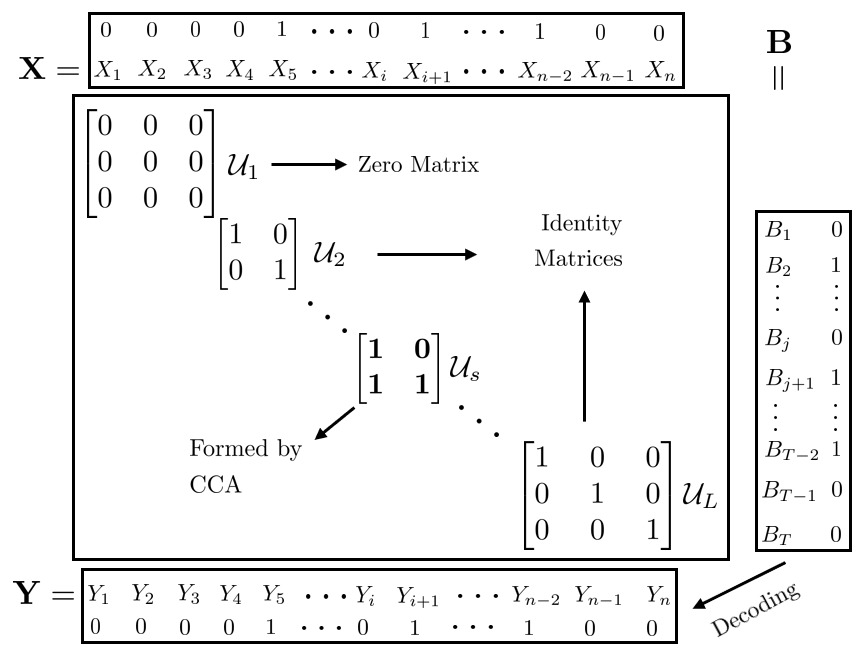}
	\caption[Co, n=512, fix t]{A typical group-testing matrix for the block algorithm introduced in Section~\ref{sec:block}. The encoding/decoding process contains three different cases. If a subset $\is_{s}$ is ample, CCA is implemented for the corresponding block (and the implementation is independent with other blocks); for $\is_1$, we simply set all items to $0$; moreover, for the subsets that are not ample or the subset is $\is_L$, we test the items inside individually.}
	\label{fig:block}
\end{figure}
In this work, we partially tackle this problem by employing the pre-partition model in Section~\ref{sec:pre} and designing a \textit{block algorithm} consisting of the CCA as a testing module for each block. Note that not all distributions $\pvt$ satisfy the definitions in Section~\ref{sec:pre}. Nonetheless, the definitions do cover a broad class of distributions that are nonuniform.

As in Section~\ref{sec:pre}, suppose there exist $L$ pre-partitioned subsets$\{\is_{s}\}_{s=1}^{L}$.
Then the group-testing matrix $\mathbf{M}$ for the whole testing procedure is
partitioned into $L-1$ sub-matrices $M_{s}$ corresponding to the subsets $\is_s$ that are bounded below by
${\et}/{2\np}$, and such that 
\begin{align*}
M&=\bigoplus_{s=2}^{L}{M_{s}}
\end{align*}
where $\bigoplus$ denotes the direct sum of matrices.

In our block algorithm, we assume the existence of suitable
pre-partition and use, as a sub-algorithm, the CCA for each subset $\is_{s}$, for which we control the testing complexity and the corresponding probability of error. We exemplify the matrix $\mathbf{M}$ of the block algorithm in
Figure~\ref{fig:block}.

Suppose the $L$ subsets are pre-partitioned according to the pre-partition model in Section~\ref{sec:pre}.
Each ample subset of the partition will be considered separately.
We use the following steps to specify the corresponding testing sub-matrix $M_{s}$ ($s=2,\ldots,L$):

First, according to the given \textit{a priori} probability vector $\pvt$, compute the corresponding $\widehat{\pvt}=(\widehat{p}_1,\widehat{p}_2,\ldots,\widehat{p}_n)$ where
\begin{align}
\label{eq:dis}
\widehat{p}_i=\left({1-p_i}\right)/\left({n-\mu}\right).
\end{align} Then compute the \textit{group testing sampling parameter $g^*$} by
\begin{align*}
g^{*}&:=\frac{-1}{\ln\left(\sum_{i=1}^{n}\widehat{p_{i}}\left(1-p_{i}\right)\right)},
\end{align*}
which is the optimal parameter for our purposes, as shown in~\ref{pf3}.

Then for each testing sub-matrix $M_{s}$, in each row we choose the items with
replacement $g^{*}$ times according to the probability distribution vector
$\widehat{\pvt}$ and form the testing matrix $M$ by
\begin{align*}
M&=\bigoplus_{s=2}^{L}{M_{s}}\nonumber
\end{align*}
as discussed in~\ref{B-Coco}.

\subsection{Upper Bounds on the Number of Tests $\nt$}
\label{ss:theorems}

For the adaptive group testing, the laminar algorithms introduced in Section~\ref{sec:ALA} satisfies the following theorem:
\begin{theorem}
	\label{th:LA} 

	The laminar algorithms (either maximum entropy-based or source codes-based) need at most $2\ent\left(\ivt\right)+2\mu$ tests in expectation, \textit{i.e.,}
	\begin{align*}
	\mathbbm{E}\left[T\right]\leq 2\ent\left(\ivt\right)+6\mu.
	\end{align*}
\end{theorem}	
Note that the items in $\tps$ are distributed independently. Partitioning the population set $\tps$ and testing the subsets individually, proper concentration inequalities imply the following corollary:
\begin{corollary}	
	\label{th:LA2} 
	Furthermore, with probability of error at least
	\begin{align*}
 1-\exp\left(-2\delta^2n^{1/4}\right).
	\end{align*}
the number of tests $\nt$ satisfies
	\begin{align*}
	\nt\leq 2\left(1+\delta\right) \left(\ent\left(\ivt\right)+3\mu\right).
	\end{align*}
\end{corollary}

For the non-adaptive group testing, the CCA introduced in~\cite{chan2012non} satisfies the following theorem:

\begin{theorem}[CCA~\cite{chan2012non}]
	\label{th:ModCoco}
	If the universal set $\tps$ is bounded from above by ${1}/{2}$, then for any $\delta>0$ the CCA in~\cite{chan2012non} requires no more than $$T\leq 4e\left(1+\delta\right)\mu \ln n$$ number of tests with probability of error at most
	$\et\leq 2n^{-\delta}$.
\end{theorem}

Furthermore, the block algorithm introduced in Section~\ref{sec:block} satisfies the following:

\begin{theorem}
	\label{th:BlockCoco}
	For any $0<\uet\leq 1$ and $\delta>0$, if the entropy of $\ivt$ satisfies
	\begin{align*}
	\ent\left({\ivt}\right)\geq  \Gamma_\gamma^2
	\end{align*}
	where 
	\begin{align*}
	\Gamma_\gamma:=\log_{2}\left(\log_{1/\gamma}\left(\frac{2\np}{\uet}\right)\right),
	\end{align*} 
	then with probability of error at most 
	\begin{align*}
	\et\leq{\Gamma_\gamma^{-\delta+1}}+\frac{1}{2}\uet,
	\end{align*}
	the block algorithm
	requires no more than $$\nt\leq \frac{e\ln n}{\log_2(1/\gamma)}\left(1+\delta\right)\ent\left(\ivt\right)+\Gamma_{\gamma}^2+ 2\mu$$ tests.
\end{theorem}

The proofs of Theorem~\ref{th:LA}, Corollary~\ref{th:LA2}, Theorem~\ref{th:ModCoco} and Theorem~\ref{th:BlockCoco} can be found in Appendix~\ref{pf2}, \ref{pf2.5}, \ref{pf3} and~\ref{pf4} respectively.

\section{Experimental Results}
\label{exp}
We provide experimental results for both the \textit{laminar algorithms} (LA) and the
\textit{block algorithm} (BA). 
We consider three different extremal types of probability vectors $\pvt$ -- \textit{uniform}, \textit{linear}, and \textit{exponential}.

For LA, both ME and Huffman constructions are tested. 
We used $200$ different points of entropy $\ent(\ivt)$.
As a result of Monte Carlo simulation, it is observed that the expected number of tests $\expt\left[\nt\right]$ computed from $200$ independent trials at each entropy
point grows linearly in $\ent(\ivt)$ where the coefficient is a positive constant as shown in~Figure~\ref{fig:bt-exp},~\ref{fig:bt-lin},~\ref{fig:bt-uni},~\ref{fig:hf-exp},~\ref{fig:hf-lin} and \ref{fig:hf-uni}. Moreover, the tests are for standard LA without using pre-partition model to ensure the concentration results.

For BA, 
based on the aforementioned three types of distributions of $\pvt$, we test three
different values of the expected number of defectives $\mu$ and compute the corresponding probabilities of error using $200$ independent trials. 
We compare the simulated probability of error with the theoretic probability of error 
$\et$  in~Figure~\ref{fig:b-exp},~\ref{fig:b-lin} and \ref{fig:b-uni}.

\clearpage
\begin{figure}
		\centering
		\subfigure[]{
					\centering
		\includegraphics[scale=0.245]{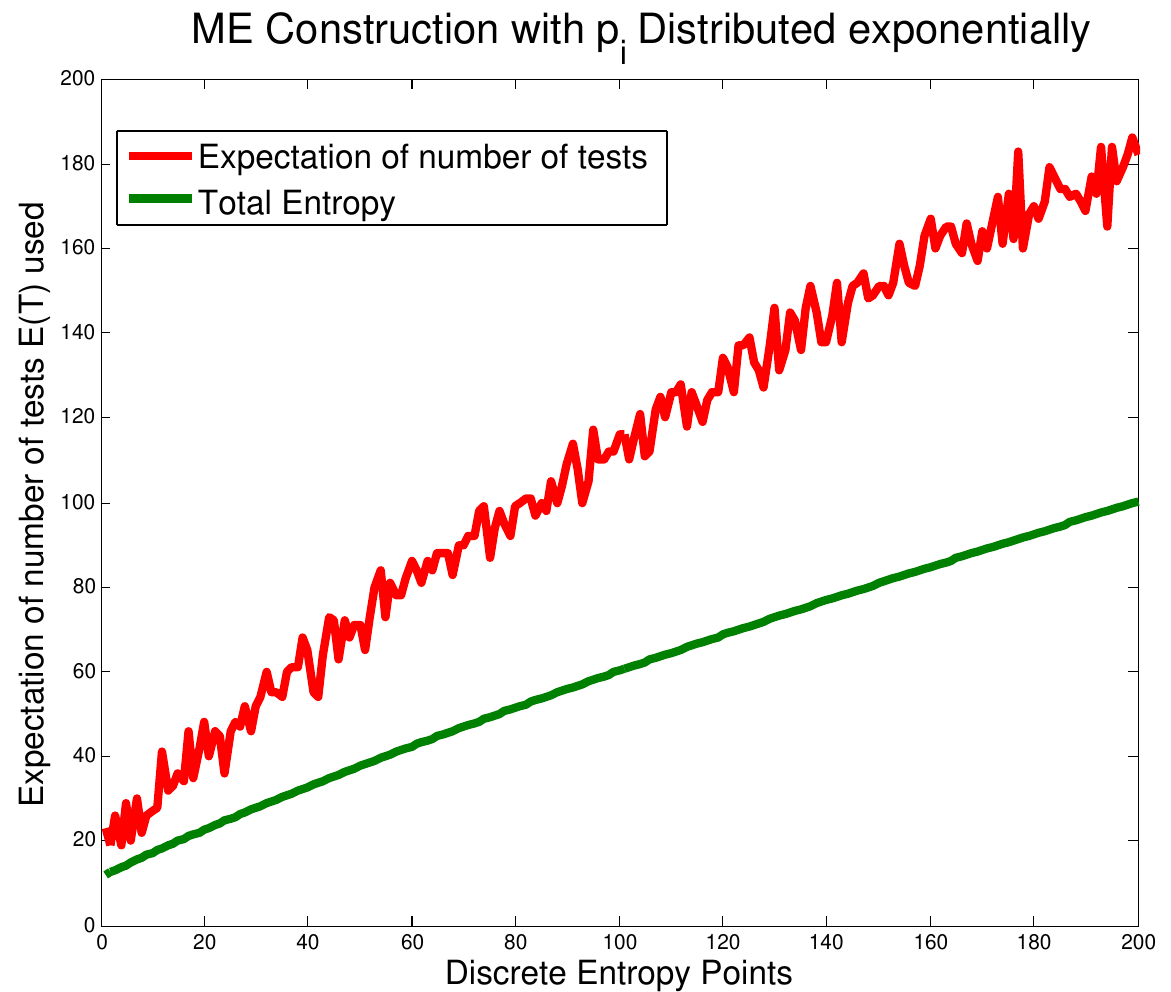}
	\label{fig:bt-exp}
}
\subfigure[]{
			\centering
		\includegraphics[scale=0.245]{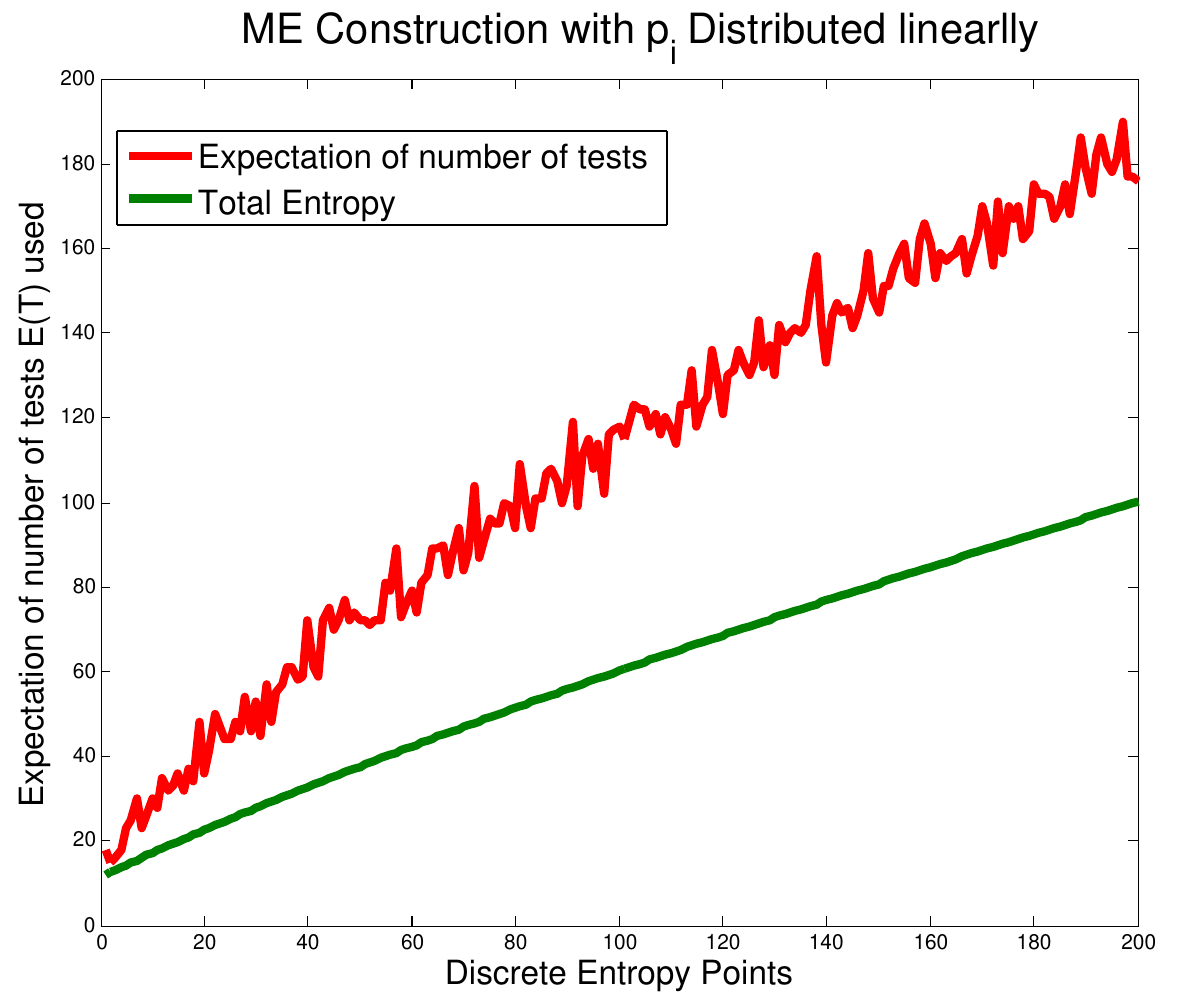}
	\label{fig:bt-lin}}
\subfigure[]{
			\centering
		\includegraphics[scale=0.245]{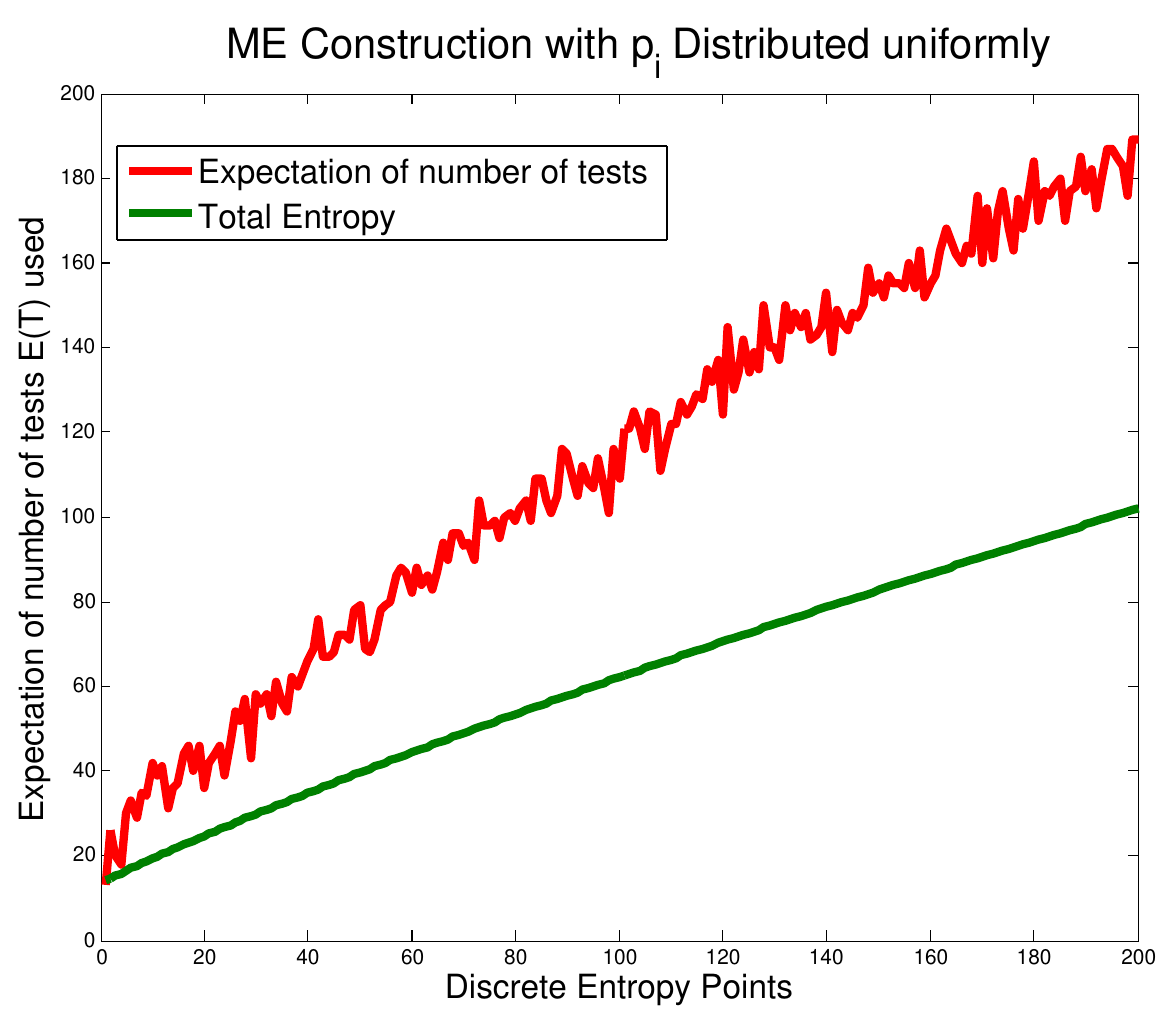}
	\label{fig:bt-uni}}
\caption{(a). Exponentially distributed $\pvt$ with $n=1000$ at $200$ different entropy points with $\expt\left[\nt\right]$ on each point calculated by $200$ independent trials. (b). Linearly distributed $\pvt$ with $n=1000$ at $200$ different entropy points with $\expt\left[\nt\right]$ on each point calculated by $200$ independent trials. (c). Uniformly distributed $\pvt$ with $n=1000$ at $200$ different entropy points with $\expt\left[\nt\right]$ on each point calculated by $200$ independent trials.}
\end{figure}
\begin{figure}
	\centering
	\subfigure[]{
		\includegraphics[scale=0.25]{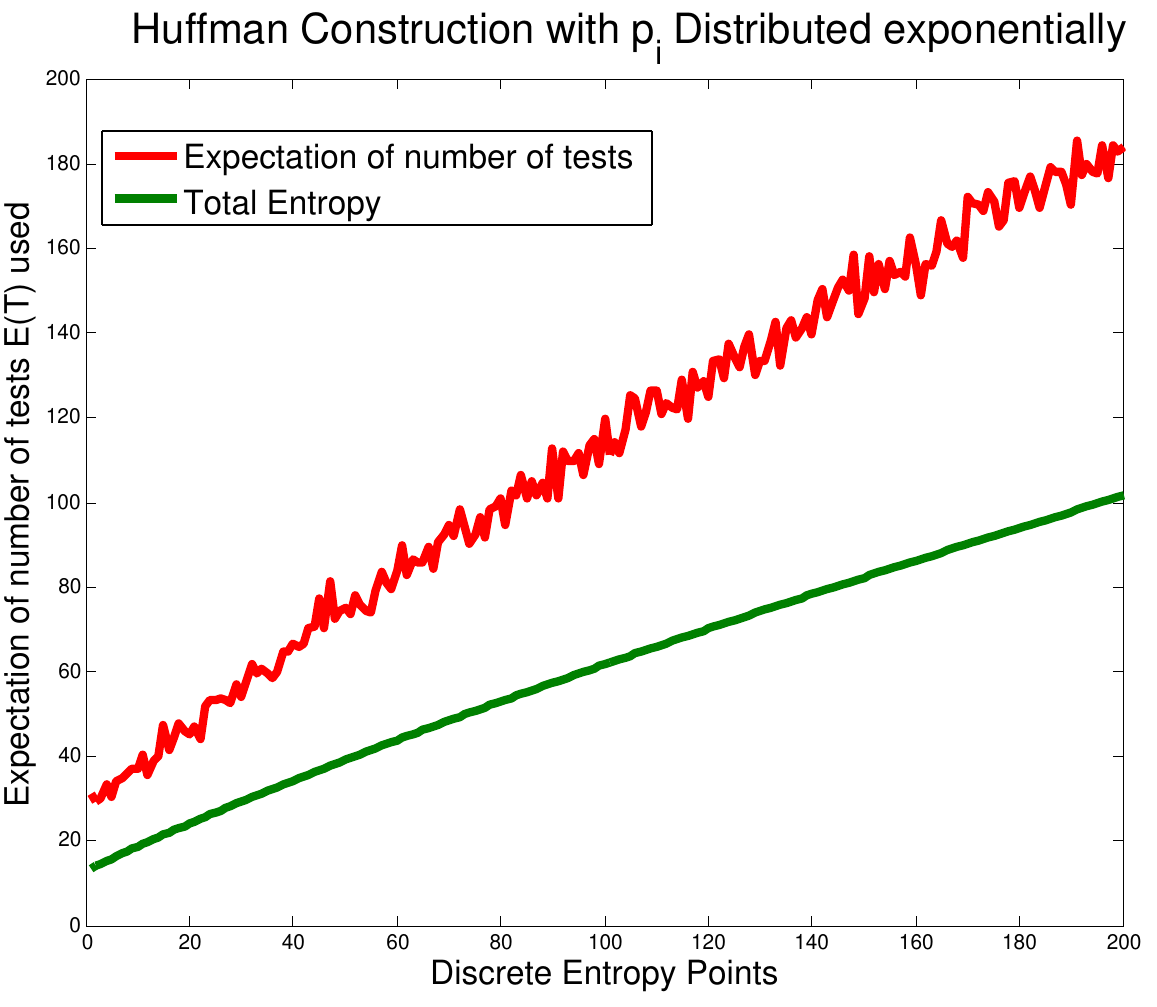}
	\label{fig:hf-exp}}
\subfigure[]{
		\includegraphics[scale=0.25]{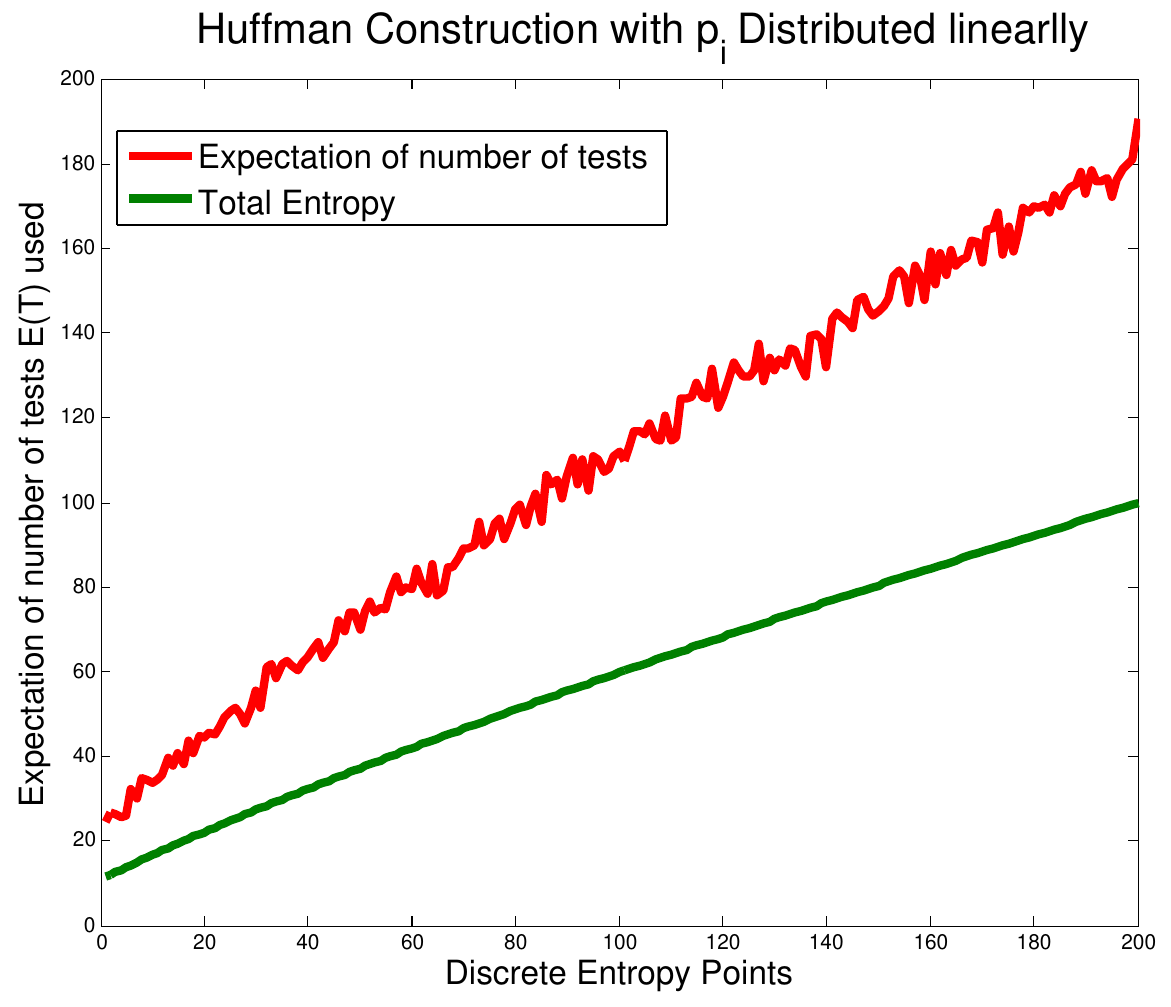}
	\label{fig:hf-lin}}
\subfigure[]{
		\includegraphics[scale=0.25]{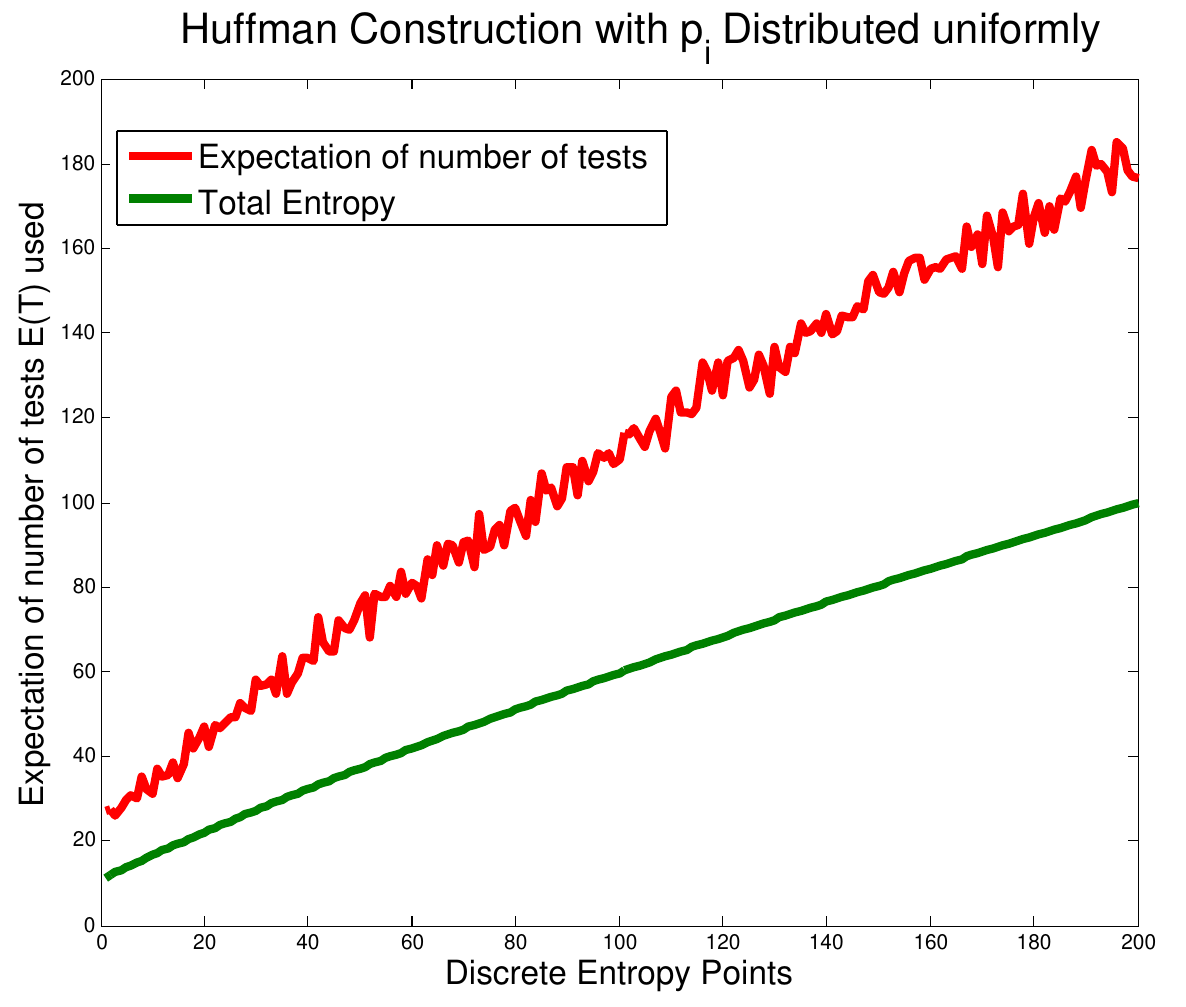}
	\label{fig:hf-uni}}
	\caption{(a). Exponentially distributed $\pvt$ with n=1000 at 200
		different entropy points with $E(T)$ on each point calculated by 200 independent trials. (b). Linearly distributed $\pvt$ with $n=1000$ at $200$ different entropy points with $\expt\left[\nt\right]$ on each point calculated by 200 independent trials. (c). Uniformly distributed $\pvt$ with $n=1000$ at $200$ different entropy points with $\expt\left[\nt\right]$ on each point calculated by $200$ independent trials.}
\end{figure}
\begin{figure}
	\centering
\subfigure[]{
		\includegraphics[scale=0.235]{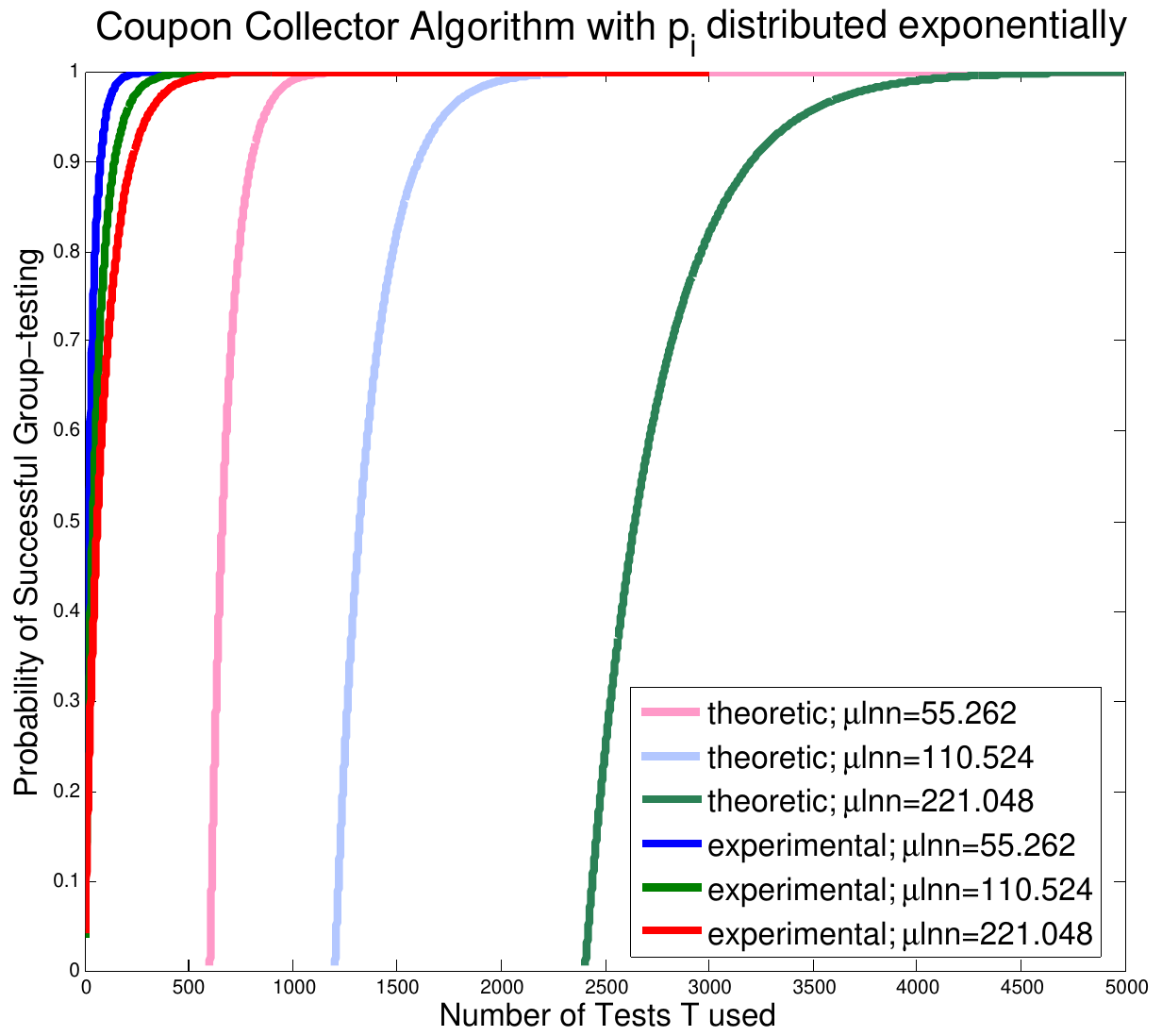}
	\label{fig:b-exp}}
\subfigure[]{
	\centering
		\includegraphics[scale=0.228]{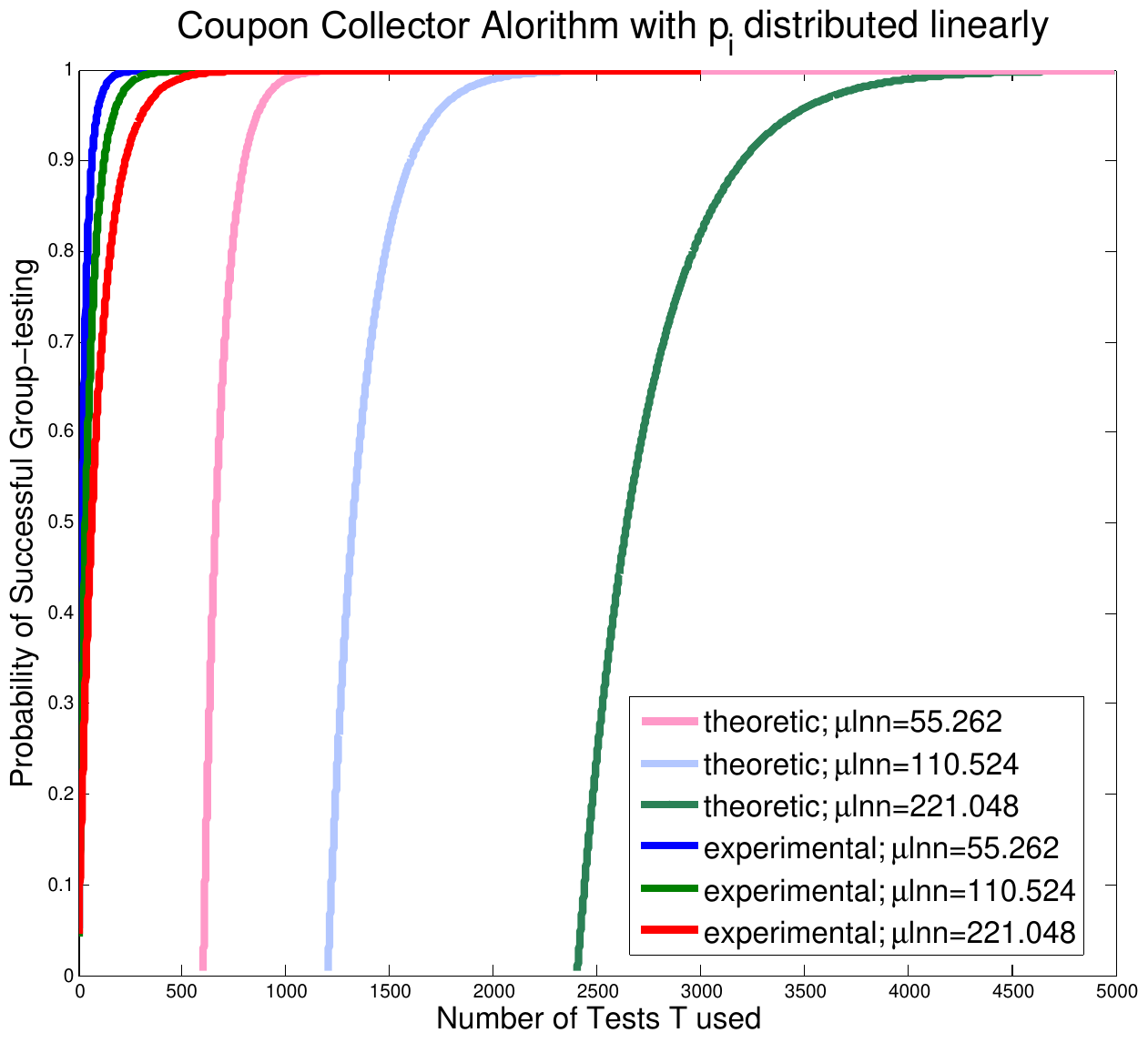}
	\label{fig:b-lin}}
\subfigure[]{
	\centering
		\includegraphics[scale=0.235]{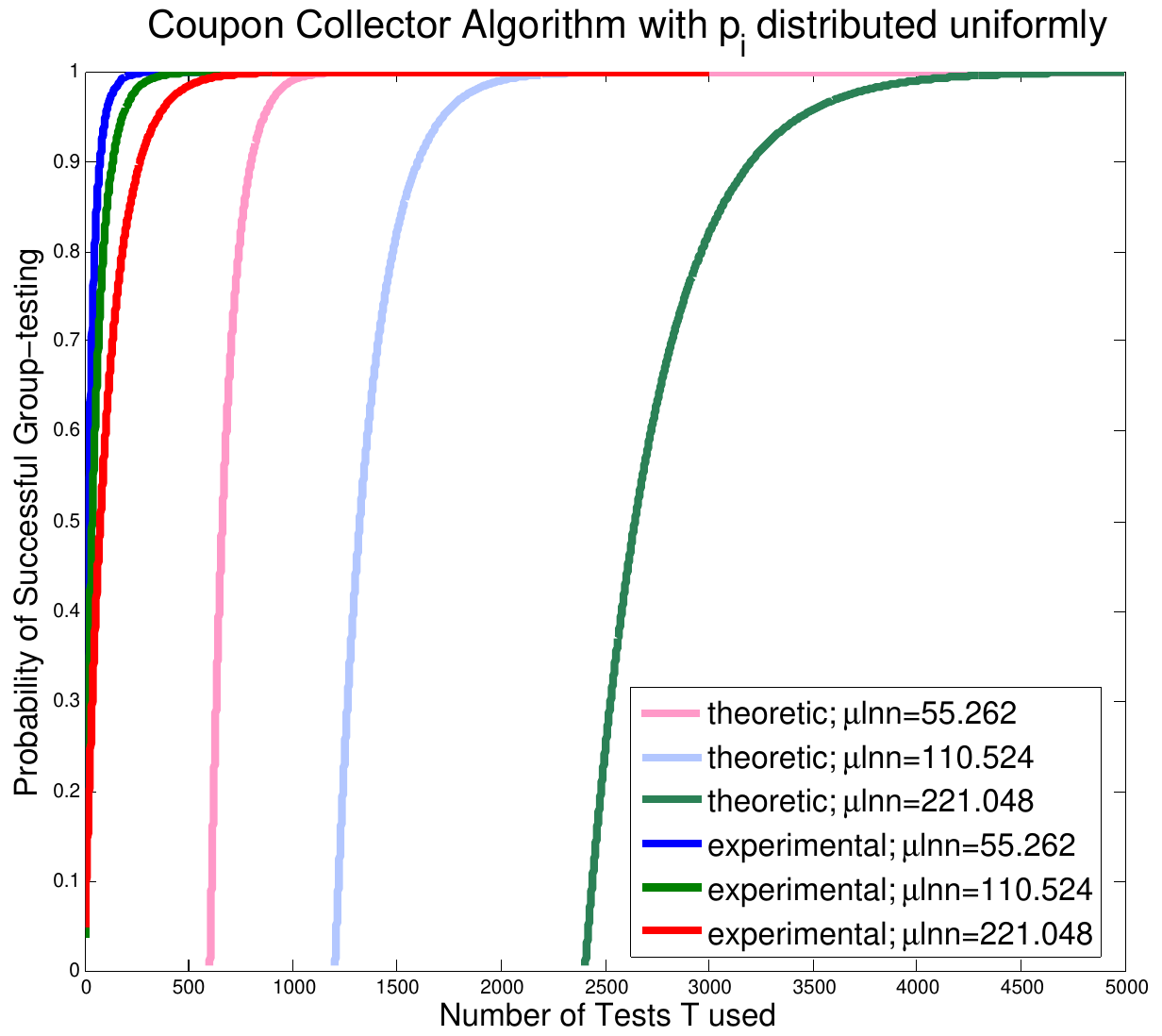}
      \label{fig:b-uni}}
  \caption{(a). The probability of successful group testing as a function of number of tests $T$ with exponentially distributed $\pvt$ and $n=1000$, $\mu=8,16,32$ sampled by independent $200$ trials. (b). The probability of successful group testing as a function of number of tests $T$ with linearly distributed $\pvt$ and $n=1000$, $\mu=8,16,32$ sampled by independent $200$ trials. (c). The probability of successful group testing as a function of number of tests $T$ with uniformly distributed $\pvt$ and $n=1000$, $\mu=8,16,32$ sampled by independent $200$ trials.}
\end{figure}

%


%
%
%
\clearpage
\addcontentsline{toc}{section}{Bibliography}
{\bibliographystyle{IEEEtran}
	\bibliography{ref}}

\begin{thebibliography}{10}
\providecommand{\url}[1]{#1}
\csname url@samestyle\endcsname
\providecommand{\newblock}{\relax}
\providecommand{\bibinfo}[2]{#2}
\providecommand{\BIBentrySTDinterwordspacing}{\spaceskip=0pt\relax}
\providecommand{\BIBentryALTinterwordstretchfactor}{4}
\providecommand{\BIBentryALTinterwordspacing}{\spaceskip=\fontdimen2\font plus
\BIBentryALTinterwordstretchfactor\fontdimen3\font minus
  \fontdimen4\font\relax}
\providecommand{\BIBforeignlanguage}[2]{{%
\expandafter\ifx\csname l@#1\endcsname\relax
\typeout{** WARNING: IEEEtran.bst: No hyphenation pattern has been}%
\typeout{** loaded for the language `#1'. Using the pattern for}%
\typeout{** the default language instead.}%
\else
\language=\csname l@#1\endcsname
\fi
#2}}
\providecommand{\BIBdecl}{\relax}
\BIBdecl

\bibitem{dorfman1943detection}
R.~Dorfman, ``The detection of defective members of large populations,''
  \emph{The Annals of Mathematical Statistics}, vol.~14, no.~4, pp. 436--440,
  1943.

\bibitem{du1993combinatorial}
D.~Z. Du and F.~Hwang, \emph{Combinatorial group testing and its
  applications}.\hskip 1em plus 0.5em minus 0.4em\relax World Scientific, 1993.

\bibitem{chan2011non}
C.~L. Chan, P.~H. Che, S.~Jaggi, and V.~Saligrama, ``Non-adaptive probabilistic
  group testing with noisy measurements: Near-optimal bounds with efficient
  algorithms,'' in \emph{Communication, Control, and Computing (Allerton), 2011
  49th Annual Allerton Conference on}.\hskip 1em plus 0.5em minus 0.4em\relax
  IEEE, 2011, pp. 1832--1839.

\bibitem{malioutov2012boolean}
D.~Malioutov and M.~Malyutov, ``Boolean compressed sensing: Lp relaxation for
  group testing,'' in \emph{Acoustics, Speech and Signal Processing (ICASSP),
  2012 IEEE International Conference on}.\hskip 1em plus 0.5em minus
  0.4em\relax IEEE, 2012, pp. 3305--3308.

\bibitem{du2006pooling}
D.~Du and F.~Hwang, \emph{Pooling designs and nonadaptive group testing:
  important tools for DNA sequencing}.\hskip 1em plus 0.5em minus 0.4em\relax
  World Scientific Pub Co Inc, 2006, vol.~18.

\bibitem{goodrich2008improved}
M.~T. Goodrich and D.~S. Hirschberg, ``Improved adaptive group testing
  algorithms with applications to multiple access channels and dead sensor
  diagnosis,'' \emph{Journal of Combinatorial Optimization}, vol.~15, no.~1,
  pp. 95--121, 2008.

\bibitem{wolf1985born}
J.~Wolf, ``Born again group testing: Multiaccess communications,''
  \emph{Information Theory, IEEE Transactions on}, vol.~31, no.~2, pp.
  185--191, 1985.

\bibitem{wadayama2013analysis}
T.~Wadayama, ``An analysis on non-adaptive group testing based on sparse
  pooling graphs,'' \emph{arXiv preprint arXiv:1301.7519}, 2013.

\bibitem{hagerup1990guided}
T.~Hagerup and C.~R{\"u}b, ``A guided tour of chernoff bounds,''
  \emph{Information processing letters}, vol.~33, no.~6, pp. 305--308, 1990.

\bibitem{chan2012non}
C.~L. Chan, S.~Jaggi, V.~Saligrama, and S.~Agnihotri, ``Non-adaptive group
  testing: Explicit bounds and novel algorithms,'' in \emph{Information Theory
  Proceedings (ISIT), 2012 IEEE International Symposium on}.\hskip 1em plus
  0.5em minus 0.4em\relax IEEE, 2012, pp. 1837--1841.

\bibitem{torney1998optimizing}
D.~Torney, F.~Sun, and W.~Bruno, ``Optimizing nonadaptive group tests for
  objects with heterogeneous priors,'' \emph{SIAM Journal on Applied
  Mathematics}, vol.~58, no.~4, pp. 1043--1059, 1998.

\bibitem{sobel1959group}
M.~Sobel and P.~A. Groll, ``Group testing to eliminate efficiently all
  defectives in a binomial sample,'' \emph{Bell System Technical Journal},
  vol.~38, no.~5, pp. 1179--1252, 1959.

\bibitem{Mal:78}
M.~B. Malyutov, ``Separating property of random matrices,'' \emph{Mat.
  Zametki}, vol.~23, pp. 155--167, 1978.

\bibitem{Dya:04}
A.~D'yachkov, ``Lectures on designing screening experiments,'' \emph{Lecture
  Note Series~10, Combinatorial and Computational Mathematics Center, Pohang
  University of Science and Technology (POSTECH)}, p. 112, Feb 2004.

\bibitem{shannon2001mathematical}
C.~E. Shannon, ``A mathematical theory of communication,'' \emph{ACM SIGMOBILE
  Mobile Computing and Communications Review}, vol.~5, no.~1, pp. 3--55, 2001.

\bibitem{huffman1952method}
D.~A. Huffman, ``A method for the construction of minimum-redundancy codes,''
  \emph{Proceedings of the IRE}, vol.~40, no.~9, pp. 1098--1101, 1952.

\bibitem{cover2012elements}
T.~M. Cover and J.~A. Thomas, \emph{Elements of information theory}.\hskip 1em
  plus 0.5em minus 0.4em\relax John Wiley \& Sons, 2012.

\bibitem{hoeffding1963probability}
W.~Hoeffding, ``Probability inequalities for sums of bounded random
  variables,'' \emph{Journal of the American statistical association}, vol.~58,
  no. 301, pp. 13--30, 1963.

\bibitem{boneh1989coupon}
A.~Boneh and M.~Hofri, ``The coupon-collector problem revisited,'' 1989.

\end{thebibliography}

\section{Appendix}

\subsection{Proof of Theorem~\ref{thm:low_bnd}}
\label{pf1}
\begin{proof}
The input vector $\ivt$, noiseless result vector $\rvt$ and estimated input vector $\ovt$ form a Markov chain 
$\ivt\to\rvt\to\ovt$. 
Moreover,
\begin{align}
\ent\left(\ivt\right) =&  \ent\left(\ivt|\ovt\right)+\mif\left(\ivt;\ovt\right).
\end{align}

Define an error random variable $E$ such that
\begin{align*}
E=&
\begin{cases}
1, & \text{if } \ovt\not=\ivt\\
0, & \text{if } \ovt=\ivt
\end{cases}.
\end{align*}

By Fano's inequality, we can bound the conditional entropy as
\begin{align*}
\ent\left(\ivt|\ovt\right) =& \ent\left(E,\ivt|\ovt\right)\\
=&\ent\left(E|\ovt\right)+\Pr\left[E=0\right]\ent\left(\ivt|\ovt,E=0\right)+\nonumber\Pr\left[E=1\right]\ent\left(\ivt|\ovt,E=1\right)\\
\leq&\ent\left(\et\right)+\et\ent\left(\ivt\right).
\end{align*}

Also we have $\mif\left(\ivt;\rvt\right)\leq\ent\left(\rvt\right)\leq \log_2|\rvt|=\nt$ by the data-processing inequality.
Hence we obtain that \[\nt \geq \left(1-\et\right)\ent\left(\ivt\right).\]
\end{proof}

\subsection{Proof of Theorem~\ref{th:LA}}
\label{pf2}
We prove the bound on $\expt\left[\nt\right]$ for both the maximum entropy-based construction (ME) and the Shannon-Fano/Huffman coding-based construction (S/H).

First, consider the  \textit{a priori} probabilities $\{p_i\}_{i=1}^{n}$ of items that are involved in a test at stage
$k$. For the ME, the group construction implies
\begin{align*}
p_i&\leq{\left(\frac{1}{2}\right)}^{k}.
\end{align*}
Therefore, the length of branch $\ell_i$ for each \textit{a priori} probability $p_i$ is bounded by
\begin{align}
\ell_i \leq \left\lceil\log_2{\frac{1}{p_i}}\right\rceil.\label{eq:adp0}
\end{align}

Note that Inequality (\ref{eq:adp0}) also holds for the Shannon-Fano coding~\cite{shannon2001mathematical}. To justify the Shannon-Fano coding is well-defined, we first introduce the following lemma:
\begin{lemma}
\label{thm:lemma1}
Let $k$ be a positive integer. If $0<p_i<1$ for $1\le i\le k$ and
$1/2\leq \prod_{i=1}^{k}{\left(1-p_{i}\right)}\leq 3/4$,
then $1/4\leq \sum_{i=1}^{k}p_i\leq1$.
\end{lemma}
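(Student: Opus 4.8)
The plan is to prove the contrapositive-flavored inequality directly by bounding the sum $\sum_{i=1}^k p_i$ in terms of $-\ln\prod_{i=1}^k(1-p_i)$, using the elementary inequality $-\ln(1-x)\ge x$ valid for $0\le x<1$. First I would take logarithms of the hypothesis: from $\prod_{i=1}^k(1-p_i)\ge 1/2$ we get $\sum_{i=1}^k -\ln(1-p_i)\le \ln 2$. Then, since $-\ln(1-p_i)\ge p_i$ for each $i$ with $0<p_i<1$, summing yields $\sum_{i=1}^k p_i \le \sum_{i=1}^k -\ln(1-p_i)\le \ln 2 < 1$. This immediately gives the claim, in fact with the slightly stronger bound $\ln 2$ rather than $1$.

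The key steps, in order, are: (1) record the elementary fact $-\ln(1-x)\ge x$ for $x\in[0,1)$ — I would either cite it as standard or note it follows from the concavity of $\ln$ / the Taylor expansion $-\ln(1-x)=\sum_{n\ge1}x^n/n$; (2) apply the monotone logarithm to the product hypothesis to turn it into an additive statement; (3) chain the two inequalities. There is essentially no obstacle here — the only thing to be mildly careful about is that the inequality $-\ln(1-x)\ge x$ needs $x<1$, which is guaranteed by the hypothesis $p_i<1$, and that the $p_i$ are nonnegative so that all terms have the right sign.

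An alternative, logarithm-free route would be an induction on $k$: the base case $k=1$ is immediate since $1-p_1\ge 1/2$ forces $p_1\le 1/2\le 1$, and for the inductive step one would want to show that if $\prod_{i=1}^{k}(1-p_i)\ge 1/2$ then the partial sums stay controlled — but this is actually awkward because the natural inductive hypothesis does not chain cleanly (splitting off one factor changes the threshold $1/2$). So I would not pursue the induction; the logarithmic argument above is cleaner and self-contained. I expect the write-up to be only a few lines, with the ``main obstacle'' being merely to state the supporting elementary inequality cleanly rather than any genuine difficulty.
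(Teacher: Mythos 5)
Your proof is correct, and it takes a genuinely different (though closely related) route from the paper's. The paper stays multiplicative: it rewrites the hypothesis as $\prod_{i=1}^{k}\frac{1}{1-p_i}\leq 2$, expands each factor as the geometric series $\sum_{j\ge 0}p_i^{j}$, and observes that multiplying out and keeping only the constant term and the linear terms gives $\prod_{i=1}^{k}\frac{1}{1-p_i}\ge 1+\sum_{i=1}^{k}p_i$, whence $\sum_i p_i\le 1$. You instead pass to logarithms and use the pointwise bound $-\ln(1-x)\ge x$ on $[0,1)$, which turns the product hypothesis into $\sum_i -\ln(1-p_i)\le \ln 2$ and hence $\sum_i p_i\le \ln 2<1$. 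Both arguments are a few lines and rest on the same underlying series fact (your inequality is just the first term of the expansion $-\ln(1-x)=\sum_{n\ge 1}x^n/n$, while the paper uses the first terms of the product of geometric series), but your version buys a strictly sharper conclusion: the constant $\ln 2$ is in fact the supremum of $\sum_i p_i$ under the hypothesis (approached by many small, equal $p_i$), whereas the paper's bound of $1$ is all that is needed for Kraft's inequality and keeps the argument free of logarithms. Your remark that the naive induction does not chain cleanly is also accurate; neither you nor the paper pursues it.
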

\begin{proof}
Given $\prod_{i=1}^{k}{\left(1-p_{i}\right)}\geq 1/2$, or equivalently, 
\begin{equation}
\prod_{i=1}^{k} \frac{1}{1-p_{i}} \leq 2, \label{eq:adp1}
\end{equation}
it follows that the inequality (\ref{eq:adp1}) can then be expanded by its geometric sum as
\begin{align*}
2 & \leq \prod_{i=1}^{k}\left({\sum_{j=0}^{\infty}{{p_i}^j}}\right) \\
& \leq 1+\sum_{i=1}^{k}{p_i},
\end{align*}
which yields the desired inequality $\sum_{i=1}^{k}p_i\leq1$.

Furthermore, the Weierstrass product inequality implies that
\begin{align*}
\frac{3}{4}\geq \prod_{i=1}^{k}{\left(1-p_{i}\right)}\geq 1-\sum_{i=1}^{k}p_i
\end{align*}
yielding that
\begin{align}
\label{app:5}
\sum_{i=1}^{k}p_i\geq \frac{1}{4}.
\end{align}

\end{proof}

It remains to note that that under the partition for each subset $\ts_{1,1}^{r}$  such that
\begin{align}
 \prod_{X_{i}\in \ts_{1,1}^{r}}{\left(1-p_{i}\right)}\geq\frac{1}{2},\label{eq:ad2}
\end{align}
then the S/H-based algorithm is well-defined. Since (\ref{eq:ad2}) is the construction requirement in the first stage, by Lemma~\ref{thm:lemma1}, we have the Shannon-Fano/Huffman coding procedure is well-defined since the summation of \textit{a priori} probabilities within each subset is smaller or equal to $1$. Recall that $\overline{r}$ denotes the number of subsets $\ts_{1,1}^{r}$ that are to be tested in the first stage. It follows that 
\begin{align*}
\min_{r=1}^{\overline{r}}\sum_{i:X_i\in\ts_{1,1}^{r}}^{k}\overline{r} p_i \leq\sum_{i=1}^{n}p_i  = \mu.
\end{align*}

Using (\ref{app:5}), we get
\begin{align*}
\overline{r}\leq 4\mu.
\end{align*}

For each branch of length $\ell_i$, the number of tests required is at most $2\ell_i$ with probability $p_i$ when the corresponding item is defective. Therefore, we can bound the expected number of tests $\expt[\nt]$ as
\begin{align}
\label{eq:ad3}
\expt[\nt]&\leq \sum_{i=1}^n 2p_i\ell_i + \overline{r}\\
\nonumber
&\leq \sum_{i=1}^n 2p_i\left(\log_2\frac{1}{p_i}+1\right) + \overline{r}\\
\nonumber
&\leq 2\ent\left(\ivt\right)+6\mu
\end{align}
where $\mu$ is the summation of all \textit{a priori} probabilities and (\ref{eq:ad3}) comes from our testing procedure such that
a positive testing outcome, implies two more tests for both its children.

\subsection{Proof of Corollary~\ref{th:LA2}}
\label{pf2.5}
For the second part of the result of the adaptive algorithms, we show that $\nt$ concentrates ``properly''. The proof is based on partitioning the universal set $\tps$ into $K$ subsets $\is_1,\ldots,\is_{K}$. For each subset $\is_1,\ldots,\is_{K}$, denote by $T_1,\ldots,T_{K}$ the corresponding number of tests required and let $n_s$ be the number of items in the subset $\is_s$. From Theorem~\ref{th:LA}, we know that
\begin{align*}
1\leq &T_s\leq n_s\\
\expt[&T_s]\leq 2\ent\left(\ivt_s\right)+6\mu_s, \quad \forall \ s=1,\ldots,K
\end{align*}

Moreover, the random variables $T_1,\ldots,T_K$ are independent, since the items $X_1,\ldots,X_n$ are all independent. Also, $T=\sum_{s=1}^{K}T_s$. We have
\begin{align*}
\expt\left[T\right] \leq 2\sum_{s=1}^{K}\ent\left(\ivt_s\right)+6\sum_{s=1}^{K}\mu_s\leq 2\ent\left(\ivt\right)+6\mu.
\end{align*}
Applying  Hoeffding's inequality~\cite{hoeffding1963probability},
\begin{align}
\Pr\left(T \geq 2\left(1+\delta\right) \left(\ent\left(\ivt\right)+3\mu\right) \right)\leq \exp\left(-\frac{2K^2\delta^2}{\sum_{s=1}^{K}\left(n_s-1\right)^2}\right).
\end{align}

Setting $K=n^{3/4}$ and $n_s=n^{1/4}$ for all $s=1,\ldots,K$, we get
\begin{align*}
\Pr\left(T \geq 2\left(1+\delta\right) \left(\ent\left(\ivt\right)+3\mu\right) \right)\leq \exp\left(-2\delta^2n^{1/4}\right).
\end{align*}

\subsection{Proof of Theorem~\ref{th:ModCoco}}
\label{pf3}
\begin{proof}
The proof is modified from the proof of Thoerem 3 in~\cite{chan2011non}. The goal is to efficiently identify all non-defective items in the universal set $\tps$. As \cite{chan2011non} pointed out, it is possible to map the problem to the Coupon Collector's Problem. Non-defective items stand for the coupons. The set of negative tests which directly reveals non-defective items can be viewed as a chain of coupon collection.  

Then for each row, we assume a fixed \textit{group testing sampling parameter} $g>0$ and without of generality we assume that $g$ is an integer. We draw the coupons $g$ times (with replacement) according to a particular sampling distribution $\hpvt$ (specified in~\ref{eq:dis}). Hence the probability of obtaining an outcome $0$ for each test is $\left(\sum_{i=1}^{n}\widehat{p_{i}}\left(1-p_{i}\right)\right)^{g}$ and in total we draw the coupons, \textit{i.e.,} the non-defective items from the universal set $\tps$ $Tg$ times. Thus, we can regard a test as a length-$g$ sequence of selection and when a collector obtains a full set of coupons, the number of coupons collected should be at least the stopping time $\mathcal{T}$. In expectation, we can summarize the following equation:
\begin{align}
\nt g\left(\sum_{i=1}^{n}\widehat{p}_{i}\left(1-p_{i}\right)\right)^{g} &\geq \expt{[\mathcal{T}]}. \label{eq:et1}
\end{align}
For items being drawn with non-uniform distribution $\hpvt$,~\cite{boneh1989coupon} suggests that the expected stopping time $\expt{[\mathcal{T}]}$ is given by
\begin{equation}
\expt{[\mathcal{T}]} = \sum_{r=1}^{n}\left(-1\right)^{r+1}\sum_{1\leq i_{1}<\dotsm<i_{r}\leq
n}\frac{1}{\widehat{p}_{i_{1}}+\widehat{p}_{i_{2}}+\dotsm+\widehat{p}_{i_{r}}}. \label{eq:cc0}
\end{equation}

\begin{lemma}
\label{thm:bino1}
Let $n \in \mathbb{Z^{+}}$, we have
\begin{align*}
\sum_{r=1}^{n}\left(-1\right)^{r-1}{{n}\choose{r}}\frac{1}{r} &= \sum_{r=1}^{n}{\frac{1}{r}}.
\end{align*}
\end{lemma}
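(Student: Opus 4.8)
The plan is to prove this identity by replacing $1/r$ with the integral $\int_0^1 x^{r-1}\,dx$, which turns the alternating binomial sum into a geometric series and makes everything transparent. First I would write, for each $r\ge 1$,
\[
\frac{1}{r}=\int_0^1 x^{r-1}\,dx,
\]
so that, interchanging the (finite) sum with the integral,
\[
\sum_{r=1}^{n}(-1)^{r-1}\binom{n}{r}\frac{1}{r}=\int_0^1\left(\sum_{r=1}^{n}(-1)^{r-1}\binom{n}{r}x^{r-1}\right)dx.
\]

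Next I would evaluate the inner sum. By the binomial theorem $\sum_{r=0}^{n}\binom{n}{r}(-x)^r=(1-x)^n$, hence for $x\in(0,1]$
\[
\sum_{r=1}^{n}(-1)^{r-1}\binom{n}{r}x^{r-1}=\frac{1-(1-x)^n}{x},
\]
and the right-hand side extends continuously to $x=0$ with value $n$, so the integral is a proper Riemann integral. Substituting $u=1-x$ and using the geometric-series identity $\frac{1-u^n}{1-u}=\sum_{j=0}^{n-1}u^j$ (valid for $u\ne 1$, and the single point $u=1$ does not affect the integral), I get
\[
\int_0^1\frac{1-(1-x)^n}{x}\,dx=\int_0^1\frac{1-u^n}{1-u}\,du=\int_0^1\bigl(1+u+\cdots+u^{n-1}\bigr)\,du=\sum_{r=1}^{n}\frac{1}{r},
\]
which is exactly Lemma~\ref{thm:bino1}.

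I do not expect any genuine obstacle here: the argument is a short chain of elementary manipulations. As an alternative one could induct on $n$ using Pascal's rule $\binom{n}{r}=\binom{n-1}{r}+\binom{n-1}{r-1}$ together with $\frac{1}{r}\binom{n-1}{r-1}=\frac{1}{n}\binom{n}{r}$, with base case $n=1$ immediate; but the integral route is cleaner. The only steps meriting a line of care are the continuous extension of the integrand at $x=0$ (so that no improper integral is involved) and the validity of the geometric-series collapse for all $u\ne 1$.
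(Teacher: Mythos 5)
Your proof is correct and is essentially the paper's own argument run in reverse: both rest on the integral representation $\int_0^1 \frac{1-(1-t)^n}{t}\,dt$, the binomial theorem, the substitution $s=1-t$, and the geometric-series collapse to the harmonic sum. The extra care you take at $x=0$ is fine but not a substantive difference.
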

\begin{proof}
\begin{align}
\sum_{r=1}^{n}{\frac{1}{r}} & = \int_0^1 \frac{1-s^{n}}{1-s} ds \label{eq:bino0} \\
& = \int_0^1 \frac{1-\left(1-t\right)^{n}}{t} dt \label{eq:bino1} \\
& = \int_0^1 \left[\sum_{r=1}^{n}(-1)^{r-1}{{n}\choose{r}}t^{r-1}\right] dt \label{eq:bino2} \\
& = \sum_{r=1}^{n}\left(-1\right)^{r-1}{{n}\choose{r}}\frac{1}{r} \nonumber
\end{align}
where (\ref{eq:bino0}) follows from the expansion of geometric sum; (\ref{eq:bino1}) follows from substituting $s=1-t$, and (\ref{eq:bino2}) follows from the binomial theorem.
\end{proof}
Recall that $\mu:=\sum_{i=1}^{n}{p_{i}}$. Inequality (\ref{eq:et1}) can be further computed as
\begin{align}
\expt{[\mathcal{T}]} & = (n-\mu)\sum_{r=1}^{n}\left(-1\right)^{r+1}  \sum_{1\leq i_{1}<...<i_{r}\leq n}\frac{1}{r}\left( 1-\frac{p_{i_{1}}+p_{i_{2}}+...+p_{i_{r}}}{r} \right)^{-1} \label{eq:cc1} \\
& = (n-\mu)\sum_{r=1}^{n}\left(-1\right)^{r+1}  \sum_{1\leq i_{1}<...<i_{r}\leq n}\frac{1}{r}\left(\sum_{j=0}^{\infty}\left(\frac{p_{i_{1}}+p_{i_{2}}+...+p_{i_{r}}}{r}\right)^{j}\right)\label{eq:cc2} \\
&\leq (n-\mu)\sum_{r=1}^{n}\left(-1\right)^{r+1}{{n}\choose{r}}\frac{1}{r}\left ( 1 + \frac{\mu}{n}\sum_{s=0}^{\infty} 2^{-s} \right ) \label{eq:cc3} \\
& = \left(n-\mu\right)\ln n \left (1+\frac{2 \mu}{n} \right ) \label{eq:cc4} \\
& < \left(n+\mu\right)\ln n \label{eq:cc5}
\end{align}
where (\ref{eq:cc1}) follows from substituting $\widehat{p_{i}}:=(1-p_{i})/(n-\mu)$; (\ref{eq:cc2}) follows from the expansion of geometric sum with the fact that every $p_{i}$ as well as the average $r^{-1}\sum_{j=1}^{r}{p_{i_{j}}}$ is between 0 and 1. Since we assume the universal set $\tps$ is bounded above by ${1}/{2}$, making use of  $p_i<{1}/{2}$ and expanding Eqn~(\ref{eq:cc2}) we obtain (\ref{eq:cc3}). Moreover, (\ref{eq:cc4}) follows from lemma~\ref{thm:bino1} and $\sum_{r=1}^{n} 1/r \leq \ln(n)$.

Substituting (\ref{eq:cc5}) into (\ref{eq:et1}) and optimizing for $g$, we obtain $g^{*}=-1\big/\ln\left(\sum_{i=1}^{n}\widehat{p}_{i}\left(1-p_{i}\right)\right)$. 
Such choice of $g^{*}$ and the assumption $\mu\ll n$ allow (\ref{eq:et1}) to be simplified as 
\begin{equation}
\nt \geq e\mu\ln n
\label{eq:et2}
\end{equation}
since the ratio between the expected stopping time and the expected non-defective items in a single negative test can be computed as
\begin{align*}
\frac{\left(n+\mu\right)\ln n}{g^{*}\left(\sum_{i=1}^{n}\widehat{p_{i}}(1-p_{i})\right)^{g^{*}}}
&=\left(n+\mu\right)\ln n\left(-\ln\left(\sum_{i=1}^{n}\widehat{p_{i}}(1-p_{i})\right)\right)\\
&=e\left(n+\mu\right)\ln n\left(-\ln\left(\frac{\sum_{i=1}^{n}(1-p_{i})^{2}}{n-\mu}\right)\right)\\
& =e\left(n+\mu\right)\ln n\left(-\ln\left(\frac{\sum_{i=1}^{n}(1-2p_{i}+{p_{i}}^{2})}{n-\mu}\right)\right)\\
& <e\left(n+\mu\right)\ln n\left(\ln\left(\frac{n-\mu}{n-2\mu}\right)\right)\\
&  \doteq e\mu \ln n.
\end{align*}
Note that (\ref{eq:et1}) only accounts for the expectation. Now we take variance in consideration. By Chernoff bound, the actual number of items in the negative tests can be smaller than $1-\alpha$ times the expected number with probability at most $\exp\left(-\alpha^{2}T\right)$. In tail estimate of the coupon collector problem, with probability $n^{{-\beta}/{2}+1}$, a collector requires more than $\beta\expt{[\mathcal{T}]}$ coupons before he is able to collect a full set. Thus, applying the union bound over two error events, the Inequalities~(\ref{eq:et1}) and~(\ref{eq:et2}) are generalized as the following statement:
\begin{align}
\left(1-\alpha\right)\nt \geq& {e\beta}\mu \ln n \label{eq:cc6},
\end{align}
which does not hold with probability $\et$ at most $\exp\left(-\alpha^{2}T\right)+n^{\frac{-\beta}{2}+1}$. Taking $\alpha=\frac{1}{2}$ in (\ref{eq:cc6}), we can bound the probability of error as
\begin{align*}
\et \leq& \exp\left(-\alpha^{2}T\right)+n^{\frac{-\beta}{2}+1}\\
\leq& \exp\left(-\frac{\nt}{4}\right)+n^{\frac{-\beta}{2}+1}\\
\leq& n^{-\frac{e\mu\beta}{2}}+n^{\frac{-\beta}{2}+1}\\
\leq& 2n^{\frac{-\beta}{2}+1}.
\end{align*}
If we reparameterize  $\beta$ by $2(\delta+1)$, we get $2n^{\frac{-\beta}{2}+1}= 2n^{-\delta}$. Hence, Theorem~\ref{th:ModCoco} holds.
\end{proof}

\subsection{Proof of Theorem~\ref{th:BlockCoco}}
\label{pf4}

\begin{proof}

Before proceeding, we need some additional notations based on Definition~\ref{def:2} and~\ref{def:3}. Let $A\leq L$ be the total number of ample subsets and denote by $\is_{s}$ an ample subset indexed by $s \in \{1, 2,\ldots, A \}$. Moreover, let  $n_{s}$ be the number of items in
$\is_s$, $\mu_{s}:=\sum_{i\in\is_{s}}p_i$ be the sum of \textit{a priori} probabilities of items in $\is_{s}$ and $\ivt_s$ be the population vector for each $\is_s$.

The total number of tests $\nt$ is the sum of the total number of tests for ample subsets, denoted by $\nt_{\mathrm{ample}}$, and the total number of tests for unbounded or non-ample subsets, denoted by $\nt_{\mathrm{non}}$. 

First, for the non-ample subsets, the number of tests required is at most $\left(L-2\right)\left(\Gamma_{\gamma}-1\right)+2\mu$ (at most $\Gamma_{\gamma}-1$ tests for less than or equal to $L-2$ subsets, together with the items in $\is_L$). According to the pre-partition model assumption, the number of subsets $L$ satisfies
\begin{align*}
\left(\frac{\uet}{2\np}\right)^{\left(\frac{1}{2}\right)^{L-1}}>\gamma
\end{align*}
implying that
\begin{align*}
L-2\leq \log_{2}\left(\log_{1/\gamma}\left(\frac{2\np}{\uet}\right)\right)=: \Gamma_{\gamma}
\end{align*}

Thus, we can bound $\left(L-2\right)\left(\Gamma_{\gamma}-1\right)$ by
\begin{align}
\label{app:8}
\left(L-2\right)\left(\Gamma_{\gamma}-1\right) & \leq \Gamma_{\gamma}^2
\end{align}
yielding that
\begin{equation}
\nonumber
\nt_{\mathrm{non}} \leq\Gamma_{\gamma}^2+2\mu.
\end{equation}

Second, for the ample subsets, by Theorem~\ref{th:ModCoco}, with probability of error at most $2{n_{s}^{-\delta}}$ for each ample subset $\is_{s}$, $\nt_1$ can be bounded as
\begin{align}
\label{app:6}
\nt_{\mathrm{ample}} & \leq 4e\left(1+\delta\right)\sum_{s=1}^{A}{\mu_{s} \ln n_{s}}.
\end{align}


Denote by $p^{\max}_s:=\max_{i:X_i\in\is_s}p_i$ the maximal probability in $\{p_i\}_{i:X_i\in\is_s}$. Furthermore, according to pre-partition model (see Figure~\ref{fig:partition}), for each subset $\is_s$ ($s=2,\ldots,L$),
\begin{align}
\label{app:fux}
p^{\max}_s\leq \left(\frac{\uet}{2n}\right)^{\left(1/2\right)^{(s-1)}}.
\end{align}

Thus, the entropy $\ent\left(\ivt_s\right)$ can be bounded by
\begin{align}
\label{app:1}
\ent\left(\ivt_s\right)&=\sum_{i=1}^{n}p_i\log_2\frac{1}{p_i}
\geq \sum_{i:X_i\in\is_s}p_i\log_2\frac{1}{p_i}
\geq \sum_{i:X_i\in\is_s}p_i\log_2\frac{1}{p^{\max}_s}
=\mu_s\log_2\frac{1}{p^{\max}_s}.
\end{align}
where (\ref{app:1}) follows since $X_1,\ldots,X_n$ are independent and we define $\mu_s:=\sum_{i:X_i\in\is_s}p_i$. Putting (\ref{app:fux}) into (\ref{app:1}), we obtain
\begin{align}
\label{app:3}
\ent\left(\ivt_s\right)\geq \left(\frac{1}{2}\right)^{s-1}\mu_s\log_2\left(\frac{2n}{\uet}\right)
\end{align}

Moreover, since $L\leq\Gamma_{\gamma}$, (\ref{app:3}) implies that
\begin{align}
\nonumber
\ent\left(\ivt_s\right)\geq&\left(\frac{1}{2}\right)^{L-3}\mu_s\log_2\left(\frac{2n}{\uet}\right)\\
\nonumber
\geq& \left(\frac{1}{2}\right)^{\Gamma_{\gamma}-3}\mu_s\log_2\left(\frac{2n}{\uet}\right)\\
\nonumber
\geq& 4 \left(\log_{1/\gamma}\frac{2n}{\uet}\right)^{-1}\mu_s\log_2\left(\frac{2n}{\uet}\right)\\
\label{app:4}
=& 4\log_2\left(\frac{1}{\gamma}\right)\mu_s.
\end{align}

Combining (\ref{app:6}) and (\ref{app:4}),
\begin{align}
\label{app:7}
\nt_{\mathrm{ample}} \leq 4e\left(1+\delta\right)\ln n\sum_{s=2}^{L-2}{\mu_{s} }\leq \frac{e\ln n}{\log_2(1/\gamma)}\left(1+\delta\right)\sum_{s=2}^{L-2}\ent\left(\ivt_s\right)\leq\frac{e\ln n}{\log_2(1/\gamma)}\left(1+\delta\right)\ent\left(\ivt\right).
\end{align}

The probability that there exists a misclassification in the first subset $\is_1$ is bounded from above by $\sum_{i:X_i\in\is_1}p_i\leq {\uet}/{2}$. Applying the union bound and putting (\ref{app:6}) and (\ref{app:7}) together, we conclude that  the total number of tests is bounded by
\begin{align*}
\nt & = \nt_{\mathrm{ample}} + \nt_{\mathrm{non}}\leq \frac{e\ln n}{\log_2(1/\gamma)}\left(1+\delta\right)\ent\left(\ivt\right)+\Gamma_{\gamma}^2+ 2\mu
\end{align*}
with the probability of error $\et$ satisfying (including the stage when setting the items zero directly if the corresponding probabilities are small)
\begin{align*}
\et \leq&  \sum_{s=1}^{A}{n_{s}^{-\delta}}+\frac{1}{2}\uet.
\end{align*}

Since the subsets $\is_s$ are ample, \textit{i.e.,} $n_s=\left|\is_s\right|>\Gamma_\gamma$,
\begin{align*}
\et \leq& A{\Gamma_\gamma^{-\delta}}+\frac{1}{2}\uet.
\end{align*}

Moreover, since $A\leq L<\Gamma_\gamma$ in agreement with (\ref{app:8}),
\begin{align*}
\et \leq& {\Gamma_\gamma^{-\delta+1}}+\frac{1}{2}\uet.
\end{align*}
\end{proof}


\end{document}